\def\ba{\begin{array}}
	\def\ea{\end{array}}
\def\baa{\begin{align}}
	\def\eaa{\end{align}}
\newcommand{\bsq}{\begin{subequations}}
	\newcommand{\esq}{\end{subequations}}
\newcommand{\beq}{\begin{equation}}
\newcommand{\eeq}{\end{equation}}
\newcommand{\bq}{\begin{eqnarray}}
\newcommand{\eq}{\end{eqnarray}}
\newcommand{\bqn}{\begin{eqnarray*}}
	\newcommand{\eqn}{\end{eqnarray*}}
\newcommand{\bee}{\begin{enumerate}}
	\newcommand{\eee}{\end{enumerate}}
\newcommand{\bi}{\begin{itemize}}
	\newcommand{\ei}{\end{itemize}}
\newcommand{\wang}[1]{\ifthenelse{\boolean{showcomments}}
	{ \textcolor[rgb]{1,0,1}{(ZW:  #1)}}{}}
\newcommand{\fliu}[1]{\ifthenelse{\boolean{showcomments}}
	{ \textcolor{red}{(FL:  #1)}}{}}
\newcommand{\slow}[1]{\ifthenelse{\boolean{showcomments}}
	{ \textcolor{blue}{(SL:  #1)}}{}}
\theoremstyle{definition}
\newtheorem{theorem}{Theorem}
\newtheorem{lemma}[theorem]{Lemma}
\newtheorem{corollary}[theorem]{Corollary}
\theoremstyle{definition}
\newtheorem{definition}{Definition}
\newtheorem{remark}{Remark}
\journal{xxx}
\begin{document}

\begin{frontmatter}

\title{Distributed Load-Side Control: Coping with  Variation of Renewable Generations}

\tnotetext[mytitlenote]{This work was supported  by the National Natural Science Foundation of China ( No. 51677100, U1766206, No. 51621065), the US National Science Foundation through awards EPCN 1619352, CCF 1637598, CNS 1545096, ARPA-E award DE-AR0000699, and Skoltech through Collaboration Agreement 1075-MRA.}

%% or include affiliations in footnotes:
\author[thu]{Zhaojian~Wang}
\author[thu]{Shengwei~Mei}
\author[thu]{Feng~Liu\corref{mycorrespondingauthor}}\ead{lfeng@mail.tsinghua.edu.cn}
\author[caltech]{Steven H. Low}
\author[thu]{Peng~Yang}

\cortext[mycorrespondingauthor]{Corresponding author}

\address[thu]{State Key Laboratory of Power Systems, Department of Electrical Engineering, Tsinghua University, Beijing 100084, China}
\address[caltech]{Department of Electrical Engineering, California Institute of Technology, Pasadena, CA 91105, USA}

\begin{abstract}
	{\color{black}This paper addresses the distributed frequency control problem in a multi-area power system taking into account of unknown time-varying  power imbalance. Particularly, fast controllable loads are utilized to restore system frequency under changing power imbalance in an optimal manner.} The imbalanced power  causing frequency deviation is decomposed into three parts: a known constant part, an unknown low-frequency variation and a high-frequency residual. The known steady part is usually the prediction of power imbalance. The variation may result from the fluctuation of renewable resources, electric vehicle charging, etc., which is usually unknown to operators. The high-frequency residual is usually unknown and treated as an external disturbance. Correspondingly, in this paper, we resolve the following three problems in different timescales: 1) allocate the steady part of power imbalance economically; 2) mitigate the effect of unknown low-frequency power variation locally; 3) attenuate unknown high-frequency disturbances. To this end, a distributed controller combining consensus method with  adaptive internal model control is proposed. We first prove that the closed-loop system is asymptotically stable and converges to the optimal solution of an optimization problem if the external disturbance is not included. We then prove that the power variation can be  mitigated accurately. Furthermore, we show that the closed-loop system is robust against both  parameter uncertainty and external disturbances. The New England system is used to verify the efficacy of our design. 
\end{abstract}

\begin{keyword}
	Distributed control, frequency regulation, internal model control, load-side control, renewable generation.
\end{keyword}
\end{frontmatter}

\section{Introduction}
\subsection{Background}
In the modern power system, multiple regional grids are usually interconnected to constitute a bulk grid \cite{Min:Total, Ahmadi:Probabilistic}. To maintain a stable power system, the frequency should be retained at its nominal value, e.g. 50Hz or 60Hz. Conventionally, it is realized by synchronized generators in a centralized fashion, known as a hierarchy control architecture \cite{Kundur:Power,dorfler2016breaking}. However, with the increasing penetration of volatile and uncertain renewable generations, power mismatch in the system can fluctuate rapidly with a large amount. In such a situation, the traditional manner of control may not be able to keep pace due to large inertia of the traditional synchronous generators.
Fortunately, load-side participation in frequency control opens up new possibility to resolve this problem, benefiting from its fast response \cite{Schweppe1980Homeostatic, Changhong:Design}. 
On the other hand, as controllable loads are usually dispersed across the power system, a distributed  architecture is more suitable for load-side control than the conventional centralized one. Indeed, distributed optimal control has been investigated by combining controller design with optimal dispatch problems \cite{jokic:real,zhang:real, Stegink:aunifying}. It leads to a so-called \emph{reverse engineering} methodology for designing optimal  controllers, particularly in optimal frequency control of power systems \cite{Li:Connecting, Changhong:Design, Cai:Distributed}. In this paper, we  design a distributed  load-side controller that is capable of adapting to power variation due to volatile renewable generations, such as wind farms and PV clusters.  
\subsection{Related Work}
In power system operation, frequency deviation is usually a consequence of power mismatch due to unexpected disturbances, such as sudden load leaping/dropping or generator tripping. Frequency control papers can be roughly divided into two categories in terms of the forms of power imbalance: constant power imbalance \cite{Changhong:Design, Mallada-2017-OLC-TAC, wang:nonlinear, Kasis:Primary1, Distributed_I:Wang, Distributed_II:Wang, Lu2016Distributed} and time-varying power imbalance \cite{trip2016internal,Xi2017Power,weitenberg2017robust}. In the first category,  a step change of load/generation is considered. Then generators and/or controllable loads are utilized to eliminate the power imbalance and restore the nominal frequency. In \cite{Changhong:Design}, an optimal load-side control problem is formulated and a primary frequency controller is derived to balance step power change using controllable loads. It is extended in \cite{Mallada-2017-OLC-TAC} to realize a secondary frequency control, i.e. restoring the nominal frequency. The design approach is  generalized in \cite{Kasis:Primary1}, where the  model requirement is relaxed and a passivity condition is proposed to guarantee asymptotic stability. \cite{Distributed_I:Wang,Distributed_II:Wang} further consider both steady-state and transient operational constraints in distributed optimal frequency control. In \cite{wang:nonlinear},  a nonlinear network-preserving model is considered and only limited control coverage is needed to implement the distributed optimal frequency control. A different disturbance  is considered in \cite{Lu2016Distributed}, where the secondary frequency controller is injected by constant malicious attacks. To eliminate the influence of the attacks, a detection method is derived to combine with the distributed frequency controller. 
%{\color{black}In this category, the imbalanced power is regarded a constant}

In the second category, power imbalance is not constant, creating much greater challenge to controller design and stability analysis. In \cite{trip2016internal},  power variation is modeled as  output of a known exosystem. Then an internal model controller is designed to tackle and compensate for the time-varying imbalanced power. The idea of combining  distributed control with internal model control is attractive and  inspiring. In \cite{Xi2017Power}, a centralized controller is proposed, which can track the power imbalance and maintain the system frequency within a desired range in the presence slowly changing power imbalance. The frequency still varies along with time-varying loads. In \cite{weitenberg2017robust}, measurement noise is considered in frequency control, and a leaky integral controller is proposed that can strike an acceptable trade-off between performance and robustness. 

To sum up, in most of the existing literature, power disturbance is modeled as a step change. The time-varying power disturbance is usually regarded as output of a \emph{known} exosystem. However, neither model is realistic for practical power systems, especially when a large amount of renewable generations and electric vehicles are integrated. In such a situation,  power imbalance is always time-varying and unknown, which should be carefully considered in the design of distributed frequency control.

\subsection{Contribution}
In this paper, power imbalance is decomposed into a known constant part, an unknown low-frequency time-varying part and a high-frequency residual. In power systems, the first one can be obtained by prediction while the latter two are fluctuations around the prediction. Offset error in prediction can also be considered in the unknown time-varying part. {\color{black} This decomposition suggests a way to deal with time-varying disturbances. First, a distributed control is proposed based on consensus method to balance the known constant part economically, which resolves a slow timescale operation problem. Second, a decentralized supplementary controller based on the internal model control is proposed to mitigate the effect of the unknown low-frequency variation at a faster timescale. Finally, we also ensure that the proposed controller attenuate the impact of high-frequency residual.} 

\textcolor{black}{This work can be regarded as an extension of 
	\cite{Changhong:Design, Mallada-2017-OLC-TAC, wang:nonlinear, Kasis:Primary1, Distributed_I:Wang, Distributed_II:Wang}. As the power imbalance is time varying in our case,  these previous distributed  controller may not be able to stabilize and restore the frequency, as we will demonstrate later in case studies. Here the main challenge is how to fit a time-varying tracking and compensation control into the structure of the previous distributed frequency controller.} The major difference between this paper and \cite{trip2016internal} is that the power variation  is modeled as output of a known exosystem in \cite{trip2016internal}. Since such information is difficult to obtain in practice, our model appears to be more practical. In \cite{Xinghu:Distributed}, an internal model control is leveraged to devise a distributed \emph{unconstrained} optimization which can mitigate the effects of unknown time-varying disturbances. In contrast, we consider optimal frequency control problem with both \emph{power system  dynamics} as well as \emph{power balance constraints}, which are not included in \cite{Xinghu:Distributed}. Moreover, we also analyze the robustness of the proposed controller under uncertain parameters and disturbances. Main contributions of this paper are as follows: 
\begin{itemize}
	\item A generic model of power imbalance for frequency control is established, consisting of three parts: a known constant part, an unknown low-frequency power variation and a high-frequency residual. The power variation is further modeled by a superposition of several dominant sinusoidal components. Then it is formulated as the output of an exosystem with unknown parameters; 
	\item A distributed controller is derived to restore the nominal frequency even under unknown disturbance. It is composed of two parts. One is designed based on consensus control to achieve an economic allocation of the constant part of power imbalance, while the other is designed based on adaptive internal model control to mitigate the effect of unknown power variation;
	\item Robustness of the controller under parameter uncertainty and external disturbances is analyzed. It is proved that the uncertain damping constant has no impact on the performance of the controller and the impact of external disturbances is attenuated greatly. 
\end{itemize}
\subsection{Organization}	
The rest of this paper is organized as follows. In Section 2, the network and power imbalance models are formulated. Section 3 presents the design of distributed frequency controller. In Section 4, the equilibrium of the closed-loop system is characterized with a proof of asymptotic stability. The robustness of the proposed controller under uncertainties is analyzed in Section 5. We confirm controller performance via simulations in Section 6. Section 7 concludes the paper.

\section{Problem Formulation}

\subsection{Model of Power Network }

A large power network is usually composed of multiple control areas, which are interconnected
through tie lines. For simplicity, we treat each control area
as a node with an aggregate controllable load and an aggregate uncontrollable power injection.\footnote{In our study, all controllable loads in the same area are aggregated into one controllable load. The same for the aggregate uncontrollable power injection. This simplification is practically reasonable when dealing with the frequency control problem in power systems \cite{Li:Connecting}.}
Then the power network is modeled as a graph ${\cal G }:=(N, E)$ 
where  $N=\{1,2,...n\}$ is the set of nodes (control areas) and
$E\subseteq N\times N$ is the set of edges (tie lines). If a pair of
nodes $i$ and $j$ are connected by a tie line
directly, we denote the tie line by $(i,j)\in E$. $\cal G$ is treated as directed with an arbitrary orientation and we use $(i,j)\in E$ or $i\rightarrow j$ interchangeably to denote a directed edge from $i$ to $j$. Without loss of generality, we assume ${\cal G }$ is connected.

Besides the  graph of physical power network, we also need to consider the communication network, modeled by a graph ${\cal H }$ whose nodes are the same set $N$ of graph ${\cal G }$ with
possibly a different set of edges.
An edge in ${\cal H }$ means that the two endpoints of the edge can communicate with each other directly. In this paper, we assume  ${\cal H }$ is also connected.  The set of neighbors of node $j$ in the communication graph ${\cal H }$ is denoted by $N_{cj}$. The Laplacian matrix of ${\cal H }$ is denoted as $L$.

A second-order linearized model is adopted to describe the frequency dynamics of each node. We assume the tie lines are lossless and adopt the DC power flow model, which is reasonable for a high-voltage transmission system. Then for each node $j\in N$, we have
\begin{subequations}
	\label{system model}
	\begin{align}
	\dot \theta_j\ & = \ \omega_j
	\label{eq:model.1a}
	\\
	M_j \dot \omega_j\ & =\   P^{in}_j - P^{l}_j  -D_j \omega_j \nonumber\\
	&\quad + \sum\nolimits_{i: i\rightarrow j} \! B_{ij} (\theta_i - \theta_j)
	-  \sum\nolimits_{k: j\rightarrow k} \! B_{jk}(\theta_j - \theta_k)
	\label{eq:model.1b}
	\end{align}
\end{subequations}
where, $\theta_j$ denotes the rotor angle at node $j$; $\omega_j$  the frequency deviation; $P^{in}_j$ the uncontrollable power injection; $P^{l}_j$ the controllable load. $M_j>0,\ D_j>0$ are inertia and damping constants, respectively. $B_{jk}>0$ are line parameters that depend on the reactances of  line $(j,k) \in E$.

\subsection{Model of Power Imbalance}
Denote $P^{in}_j$ as the imbalanced power in the system. It can be decomposed into two parts: a constant part and a variation part. That is
\begin{align}
\label{power disturbance}
P_j^{in}(t)=\overline P_j^{in}+\tilde q_j(t)
\end{align}
where $\overline P_j^{in}$ is the known constant part, which  could be the prediction of renewable generations and/or loads. $\tilde q_j(t)$ is the variation part, which is assumed unknown.\footnote{As $\overline P_j^{in}$ may not be accurate, the offset error of prediction is included in $\tilde q_j(t)$ component. We abuse the term $\tilde q_j(t)$ ``variation part" for simplicity. } 

The known constant part is easy to deal with, while the variation part is non-trivial. The main idea is to further decompose it into the sum of a series of sinusoidal functions, whose parameters are unknown. Then an internal model control can be utilized to trace these sinusoidal components, and then  eliminate the effects of the variation part. 

In light of \cite{Milan2013Turbulent, buvsia2016distributed, barooah2015spectral, AGUIRRE2008Dynamical}, we can approximate variation of renewable generations and load demands by a superposition of a few sinusoidal functions. 
%In addition, low frequency oscillation is inherently superposition of sinusoidal signals \cite{ Bauer1975Simulation}. 
Specifically, we decompose the power imbalance $\tilde q_j(t)$ at node $j$ injected by volatile renewable generation and loads   into 
\begin{align}
	\label{sin_functions}
	\tilde q_j(t):=q_{j0}+\sum\nolimits_{k=1}^{s_j}q_{jk}\sin(a_{jk}\cdot t+\phi _{jk}) + w_j(t)
\end{align}
where $q_{j0}$ is the prediction offset error (which is an unknown constant). The second term models the variation part, which is a superposition of $s_j$ sinusoidal functions. Their amplitudes  $q_{jk}$, frequencies $a_{jk}> 0$ and initial phases $\phi _{jk}$ are unknown but belong to a known bounded interval. 
Here we consider only a few low-frequency power fluctuations. The remaining  high-frequency residuals, denoted by $w_j(t)$, is usually quite small. So we treat $w_j(t)$ as an external disturbance and do not consider its detailed model in this paper, but simply assume that it belongs to the $\mathcal{L}_2^T$ space, i.e., for any $w_j(t)\;\;(j\in N)$,  
$\int_{0}^{T}||w_j(t)||^2\text{d}t < +\infty$ holds for all $0<T<+\infty$.

\begin{remark}[Power Variation in power system]
	In this paper, we adopt a generic model to depict $\tilde q_j(t)$ so that it is applicable to various types of power imbalance in practice. In practical power systems, $\tilde q_j(t)$ has many interpretations, some of which are listed below.
	\begin{enumerate}[1)]
		\item Variation of renewable generations. Large-scale renewable  generations may vary rapidly. As it is difficult to accurately predict volatile renewable generations,  the fluctuation is always partly unknown. Such unknown variations may lead to severe frequency fluctuations or even instability.
		\item Variation of loads. Load demands in a power system are always varying. Whereas load demand usually can be estimated quite accurately in a traditional power system,  the integration of electric vehicles, energy storage and  demand response makes demands much more difficult to predict. 
%		\item Low frequency oscillation. This is a critical problem in power system stability, which is caused by many reasons. Particularly, a local weak damping mode (e.g. inappropriate parameters in a generator's governor) in a power system may 
%		excite global power and frequency oscillations. However, as the  governor may not be accurately modeled, the oscillation modes are also unknown. 
	\end{enumerate}

{\color{black}	We use a generic form to represent the variation of renewable generations and loads instead of detailed models of wind generators and PVs. Actually, it is common to treat power variation due to wind generators, PVs and loads as an aggregated injection \cite{trip2016internal,Li:Connecting,Mallada-2017-OLC-TAC}. Here we follow this treatment. }
\end{remark}

\subsection{Equivalent Transformation of Disturbance Model}
%To investigate the variation part, 
We further investigate the dominant part in $\tilde q_j(t)$. Denote
\begin{align}
\label{variation}
q_j(t):=q_{j0}+\sum\nolimits_{k=1}^{s_j}q_{jk}\sin(a_{jk}\cdot t+\phi _{jk})
\end{align}
%As the residuals $w_j(t)$ usually have small amplitudes, 

Then we  show that $q_j(t)$ can be expressed as the output of an exosystem. To this end, define 
\begin{align}
\begin{split}
	\lambda_{j1}&\ =\ q_j(t)\\
	\lambda_{jk}&\ =\ \left(\frac{\text{d}}{\text{d}t}\right)^{k-1} q_j(t) \quad (2\le k\le \bar s_j)
\end{split}
\end{align}
where  $\bar s_j:=2s_j+1$. Then $q_j(t)$ is just the output of the following dynamic system \cite{Pulido:A,Xinghu:Distributed}:
\bsq
\begin{align}
\label{lambda1}
\dot\lambda_{j}&=A_j(\alpha_j)\lambda_{j}\\
q_j(t)&=[1\ \textbf{0}_{1\times 2s_j}]\cdot \lambda_{j}
\end{align}
\esq
where,
\begin{align}
\begin{split}
\lambda_{j}\ &:=\ [\lambda_{j1}, \cdots, \lambda_{j,\bar s_j}]^T \\ 
A_j(\alpha_j)\ &:=\  \left [ \begin{array}{*{20}{c}}
	\textbf{0}_{2s_j\times 1} & I_{2s_j}\\
	0  & \alpha_{j1}, 0, \cdots, \alpha_{js_j}, 0
\end{array}  \right ]
\label{matrix:A}
\end{split}
\end{align}
with 
$\alpha_{j1}=-\prod\limits_{l=1}^{s_j}a_{jl}^2 $, $\alpha_{j2}=-\sum\limits_{k=1}^{s_j}\prod\limits_{\scriptstyle\hfill l=1\atop\scriptstyle\hfill l\neq k}^{s_j}a_{jl}^2$, $\cdots$, $\alpha_{js_j}=-\sum\limits_{l=1}^{s_j}a_{jl}^2$. Here, $a_{jl}$ are defined in \eqref{variation}.

To facilitate the controller design, a transformation is constructed. Let $R_j:=[r_{i1}, \cdots, r_{i,\bar s_j-1}, 1]$, such that all the roots of polynomial $\tau^{\bar s_j-1}+ r_{i,\bar s_j-1}\tau^{\bar s_j-2}+\cdots+r_{i2}\tau+r_{i1}$ have negative real parts. Then define a vector $\tilde A_j(\alpha_j):=R_j(I_{\bar s_j}+A_j(\alpha_j))$ and construct the following matrix 
\begin{align}
O_j(\alpha_j):=\left[\tilde A_j^T(\alpha_j), \cdots, \left(\tilde A_j(\alpha_j)A_j^{\bar s_j-1}(\alpha_j)\right) ^T\right]^T\nonumber
\end{align}
In \cite{XU2016Output} , it is proven that $ O_j(\alpha_j)$ is nonsingular, and 
\begin{align*}
O_j^{-1}(\alpha_j)A_j(\alpha_j)O_j(\alpha_j)=A_j(\alpha_j)
\end{align*}

Let $\varphi_j=O^{-1}_j(\alpha_j)\lambda_j$. Then we have
\bsq
\begin{align}
	\label{varphi1}
	\dot\varphi_j&=A_j(\alpha_j)\varphi_j\\
	\label{varphi2}
	q_j(t)&=\tilde A_j(\alpha_j)\varphi_j
\end{align}
\esq
So far,  $q_j(t)$ is written as the output of a new exosystem \eqref{varphi1}. However, elements in $A_j(\alpha_j)\ \text{and}\ \tilde A_j(\alpha_j)$ are still unknown. According to the definition of $q_j(t)$ and boundedness of $q_{jk}, a_{jk}$, we have $\lambda_{j}$ is  bounded. Hence $\varphi_j$ is also bounded due to the nonsingular transformation. 
%\fliu{I think the description is a little bit confusing. }

From \eqref{power disturbance}, \eqref{sin_functions} and \eqref{variation}, $P_j^{in}(t)$ is composed of three parts, i.e., $\overline P_j^{in}$, $q_j(t)$ and $w_j(t)$, we will address them in different ways, giving rise to the following three problems.
\begin{itemize}
	\item[\textbf{P1}:] Balancing $\overline P_j^{in}$ economically and globally;
	\item[\textbf{P2}:] Coping with the variation of $q_j(t)$ locally;
	\item[\textbf{P3}:] Attenuating the impact of external disturbance $w_j(t)$.
\end{itemize}

{\color{black}\begin{remark}[Timescales]
		The above three problems can be interpreted from the perspective of multiple timescales in power systems. \textbf{P1} is the long-term operation problem, i.e., the system should operate economically in a steady state, where the time scale is about several minutes. \textbf{P2} is the short-term control problem with time scale of several seconds, where the low-frequency variation should be eliminated by designing proper controller. The timescale of \textbf{P3} is even faster than that of \textbf{P2}, where the controller cannot track the high-frequency disturbance accurately. In this situation, we hope to attenuate its negative impact. Thus, we resolve the distributed frequency control problem under time-varying power imbalance systematically in three different timescales, which coincides with  \textbf{P1-P3}.
\end{remark}}

\section{Controller Design}
In this section, the known steady-state part $\overline P_j^{in}$ is optimally balanced across all areas using a consensus-based distributed control, which resolves \textbf{P1}. Then the effect of variation part $q_j(t)$  is eliminated locally by using a supplementary internal model controller, resolving \textbf{P2}.  In terms of \textbf{P3}, here we do not design a specific controller to deal with $w_j(t)$. Instead we show that the proposed controller can effectively attenuate  $w_j(t)$, which will be discussed in Section V. 
%Then, a complete controller is proposed. 

\subsection{Controller for the Known Steady-state $\bar{P}_j^{in}$}
First we  formulate an optimization model for the optimal load control problem:
\begin{subequations}
	\label{eq:opt.1}  
	\begin{align}
		\text{OLC:~}\  \min\limits_{P_j^l} &\quad   \sum\nolimits_{j\in N} \frac{1}{2}\beta_j \cdot \left(P^{l}_j\right)^2
		\label{eq:opt.1d}
		\\ 
		\text{s. t.}  
		&  \quad	\sum\nolimits_{j\in N} \overline P^{in}_j=\sum\nolimits_{j\in N} P^{l}_j
		\label{eq:opt.1a}
	\end{align}   
\end{subequations}
where $\beta_j>0$ are constants.
The control goal of each area is to minimize the regulation cost of the controllable load, which is in a quadratic form \cite{trip2016internal}.  \eqref{eq:opt.1a} is the power balance  constraint.
Suppose for that $\tilde q_j(t)=0$. We design a consensus-based controller \cite{trip2016internal}
\begin{subequations}
	\label{controller_constant}
\begin{align}
	\label{constant_2}
	P^l_j&=\mu_j/\beta_j \\
	\label{constant_1}
	\dot\mu_j&=-\sum\nolimits_{k\in N_{cj}}(\mu_j-\mu_k) + \omega_j/\beta_j 
\end{align}
\end{subequations}
{\color{black}In \eqref{constant_2}, $\mu_j$ are the consensus variables, and $-\mu_j$ stands for the marginal costs of individual controllable loads. In the steady state, all $\mu_j$ should converge to an identical value for all controllable loads when $\omega_j$ converges to zero.}

This simple controller can restore the frequency and minimize the regulation cost of the controllable loads when $\tilde{q}_j(t)=0$. However, a time-varying $\tilde{q}_j(t)$ may destroy the controller. Next we use a supplementary controller to deal with $\tilde{q}_j(t)$.

\subsection{Controller Considering Varying Power Imbalance}
In this subsection, an adaptive internal model control is supplemented to mitigate $q_j(t)$, which is given by
\begin{subequations}
	\label{controller_varying}
	\begin{align}
	\label{varying_5}
	P^l_j\ &=\ \mu_j/\beta_j +[ d_j\omega_j +\tilde A_j(\hat\alpha_j){\zeta}_j]\\	
	\label{varying_1}
	\dot\mu_j\ &=\ -\sum\nolimits_{k\in N_{cj}}(\mu_j-\mu_k) + \omega_j/\beta_j \\
	\label{varying_2}
	\begin{split}
		\dot\eta_j\ &=\ -\eta_j + \overline P^{in}_j - P^{l}_j  -D_j \omega_j
	\\
	& \ \ \ \ \  + \sum\nolimits_{i: i\rightarrow j}  B_{ij} (\theta_i - \theta_j)
	-  \sum\nolimits_{k: j\rightarrow k} B_{jk}(\theta_j - \theta_k)
	\end{split}
    \\
	\label{varying_3}
	\dot{\zeta}_j\ &=\  A_j(\hat\alpha_j){\zeta}_j-G_j(\eta_j+R_j\zeta_j)    \\
	\label{varying_4}
	\dot{\hat\alpha}_j\ &=\  -k_\alpha \Lambda_j({\zeta}_j)(\eta_j+R_j\zeta_j)
	\end{align}
\end{subequations}
where $k_\alpha>0, \gamma>0$ are constant coefficients, and 
\begin{align}
G_j&=[\ \textbf{0}_{1\times (\bar{s}_j-2)},\ 1,\ \gamma\ ]^T,\nonumber\\ 
\Lambda_j({\zeta}_j)&=[\ {\zeta}_{j2},\  {\zeta}_{j4},\ \cdots,\  {\zeta}_{j,\bar{s}_j-1}\ ]^T. \nonumber
\end{align}
Here, \eqref{varying_1} is the same as \eqref{constant_1}, which is used to synchronize  $\mu_j$ and restore frequency. Dynamics of $\eta_j, \zeta_j, \hat{\alpha}_j$ are derived from the adaptive internal model. {\color{black}Comparing \eqref{varying_2} and \eqref{eq:model.1b}, we have $\dot\eta_j=-\eta_j+M_j\dot\omega_j-\tilde q_j(t)$, which implies that $\eta_j$ is intended to estimate unknown $\tilde q_j(t)$. $\zeta_j$  reproduces the dynamics of $\varphi_j$ in \eqref{varphi1}. $\hat \alpha_j$  is the estimation of $\alpha_j$. It should be noted that $\tilde A_j(\hat\alpha_j){\zeta}_j$ in \eqref{varying_5} are the estimated values of $\tilde A_j(\alpha_j)\varphi_j$, i.e. $q_j$, in \eqref{varphi1}. It will be introduced in Section 4, $\hat\alpha_j=\alpha_j$ and ${\zeta}_j=\varphi_j$ in the steady state, leading to $q_j=\tilde A_j(\hat\alpha_j){\zeta}_j$.}

In the controller \eqref{varying_5}, $\mu_j/\beta_j$  allocates $\overline{P}_j^{in}$ economically; $\tilde A_j(\hat\alpha_j){\zeta}_j$ is the output of the internal model, which is used to eliminate $q_j(t)$ asymptotically; and $d_j\omega_j$ is used to guarantee stability and enhance robustness of the controller. 
It is illustrated in Section VI that  a low-order  internal model control suffices to trace and compensate for the power variation well. 

\subsection{Closed-loop Dynamics}
Combining \eqref{system model} with \eqref{controller_varying} and omitting $w_j(t)$, we obtain a closed-loop system.
%\begin{subequations}
%	\label{closed-loop1}
%	\bq
%		\dot{\tilde \theta}_{ij}  & =&  \omega_i- \omega_j
%	\label{closed_1}	\\
%	\dot \omega_j & =& \frac{1}{M_j}\bigg( \overline P_j^{in}+q_j(t) - \left(\mu_j/\beta_j + d_j\omega_j + \tilde A_j(\hat\alpha_j){\zeta}_j\right)  \nonumber\\
%	&& \qquad -D_j \omega_j + \sum_{i: i\rightarrow j} \! B_{ij} {\tilde \theta}_{ij}	-  \sum_{k: j\rightarrow k} \! B_{jk}{\tilde \theta}_{jk}\bigg)\\
%	\label{closed_2}
%	\dot\mu_j&=&-\sum\nolimits_{k\in N_{cj}}(\mu_j-\mu_k) + \omega_j/\beta_j \\
%	\label{closed_3}
%	\dot\eta_j&=&-\eta_j +  \overline P_j^{in} - \left(\mu_j/\beta_j + d_j\omega_j +\tilde A_j(\hat\alpha_j){\zeta}_j\right)  \nonumber\\
%	&& \qquad -D_j \omega_j + \sum_{i: i\rightarrow j} \! B_{ij} {\tilde \theta}_{ij}-  \sum_{k: j\rightarrow k} \! B_{jk}{\tilde \theta}_{jk} \\
%	\label{closed_4}
%	\dot{\zeta}_j&=& A_j(\hat\alpha_j){\zeta}_j-G_j(\eta_j+R_j\zeta_j)    \\
%	\label{closed_5}
%	\dot{\hat\alpha}_j&=& -k_\alpha \Lambda_j({\zeta}_j)(\eta_j+R_j\zeta_j)
%	\eq
%\end{subequations}
Since we are only interested in  angle difference between two areas, use ${\tilde \theta}_{ij}:=\theta_i-\theta_j$ as the new state variable. 
Then perform the following transformation
\begin{align}
\label{transformation1}
\tilde{\eta}_j:=R_j\varphi_j+\eta_j,\ \ 
\tilde{\zeta}_j:={\zeta}_j-\varphi_j,\ \  
\tilde\alpha_j:=\hat\alpha_j-\alpha_j  
\end{align}
Their derivatives are 
\begin{subequations}
	\begin{align}
	\dot{\tilde{\eta}}_j    &= R_j\dot\varphi_j+\dot\eta_j \nonumber \\
	&= R_jA_j(\alpha_j)\varphi_j-\eta_j 
	+\overline P_j^{in} + \sum\nolimits_{i: i\rightarrow j} \! B_{ij} {\tilde \theta}_{ij}
	\nonumber\\
	&\quad -  \sum\nolimits_{k: j\rightarrow k} \! B_{jk}{\tilde \theta}_{jk}  -D_j \omega_j  - \left(\mu_j/\beta_j + d_j\omega_j +\tilde A_j(\hat\alpha_j){\zeta}_j\right)   \nonumber \\
	&= -{\tilde{\eta}}_j +\tilde A_j(\alpha_j){\varphi}_j -\tilde A_j(\hat\alpha_j){\zeta}_j +\overline P_j^{in} + \sum\nolimits_{i: i\rightarrow j} \! B_{ij} {\tilde \theta}_{ij} 
	\nonumber\\
	&\quad - \sum\nolimits_{k: j\rightarrow k} \! B_{jk}{\tilde \theta}_{jk}  -D_j \omega_j  - \mu_j/\beta_j - d_j\omega_j  \\                      
	\dot{\tilde{\zeta}}_j   &= \dot{\zeta}_j-\dot\varphi_j = A_j(\hat\alpha_j){\zeta}_j-G_j(\eta_j+R_j\zeta_j) - 
	A_j(\alpha_j)\varphi_j   \nonumber\\
	&= (A_j(\hat\alpha_j)-A_j(\alpha_j)){\zeta}_j + A_j(\alpha_j)({\zeta}_j-\varphi_j) \nonumber\\
	&\quad - G_j(\eta_j+R_j\zeta_j-R_j\varphi_j+R_j\varphi_j)\nonumber \\
	&= (A_j(\hat\alpha_j)-A_j(\alpha_j)){\zeta}_j + (A_j(\alpha_j)-G_jR_j){\tilde{\zeta}}_j -G_j {\tilde{\eta}}_j    
	\\
	\dot{\tilde\alpha}_j    &=\dot{\hat\alpha}_j-\dot\alpha_j = -k_\alpha \Lambda_j({\zeta}_j)(\eta_j+R_j\zeta_j)                 \nonumber\\
	&=-k_\alpha \Lambda_j({\zeta}_j)(\tilde\eta_j+R_j\tilde\zeta_j)
	\end{align}
\end{subequations}
Define $\rho_{qj}:=\tilde A_j(\alpha_j)\varphi_j - \tilde A_j(\hat\alpha_j){\zeta}_j$. Then the closed-loop system is converted into
\begin{subequations}
	\label{closed-loop2}
	\begin{align}
	\dot{\tilde \theta}_{ij} & =  \omega_i- \omega_j
	\label{closed2_1}	\\
	\begin{split}
	\dot \omega_j   &=  \frac{1}{M_j}\bigg(\overline P_j^{in}+ \rho_q + \sum\nolimits_{i: i\rightarrow j} \! B_{ij} {\tilde \theta}_{ij}
	-  \sum\nolimits_{k: j\rightarrow k} \! B_{jk}{\tilde \theta}_{jk}  \\
	& \ \ \ \ \ - \mu_j/\beta_j - 	d_j\omega_j  -D_j \omega_j\big) 
	\end{split}
	\\
	\label{closed2_2}
	\dot\mu_j      &= -\sum\nolimits_{k\in N_{cj}}(\mu_j-\mu_k) + \omega_j/\beta_j \\
	\label{closed2_3}
	\begin{split}
	\dot{\tilde{\eta}}_j   & =   -{\tilde{\eta}}_j  +\overline P_j^{in}+\rho_q + \sum\nolimits_{i: i\rightarrow j} \! B_{ij} {\tilde \theta}_{ij} -  \sum\nolimits_{k: j\rightarrow k} \! B_{jk}{\tilde \theta}_{jk} 
	\\
	 & \ \ \ \ \  -D_j \omega_j  - \mu_j/\beta_j - d_j\omega_j 
	 \end{split}
	 \\
	\label{closed2_4} 
	\dot{\tilde{\zeta}}_j & =   (A_j(\hat\alpha_j)-A_j(\alpha_j)){\zeta}_j + (A_j(\alpha_j)-G_jR_j){\tilde{\zeta}}_j 
	-G_j {\tilde{\eta}}_j   
	\\
	\label{closed2_5}
	\dot{\tilde\alpha}_j  & =   -k_\alpha \Lambda_j({\zeta}_j)(\tilde\eta_j+R_j\tilde\zeta_j)
	\end{align}
\end{subequations}

{\color{black}The new closed-loop system \eqref{closed-loop2} is equivalent to the original one , i.e. \eqref{system model} and \eqref{controller_varying}. We can hence analyze the equilibrium point and stability of the equivalent system \eqref{closed-loop2}.}

\section{Equilibrium  Point and Stability}
In this section, we analyze the equilibrium and stability of the closed-loop system \eqref{closed-loop2} when the noise $w_j(t)$ is \emph{not} considered.
%It is proved that the controller \eqref{controller_varying} can solve \textbf{P1}, \textbf{P2} simultaneously. We first introduce the closed-loop system composed of feedback controller and physical dynamics. 
%Its equilibrium point and stability are analyzed. 

\subsection{Equilibrium Point}
\label{subsec:optimality.1}
First we  define the  equilibrium point of the closed-loop system \eqref{closed-loop2}.
\begin{definition}
	\label{def:ep.1}
	A point $( \tilde\theta^*, \omega^*, \mu^*, \tilde\eta^*,\tilde\zeta^*, \tilde \alpha^*)$ \footnote{Given a collection of $y_i$ for $i$ in a certain set $Y$, $y$ denotes the column vector
		$y := (y_i, i\in Y)$ of a proper dimension with $y_i$ as its components.}
	is an \emph{equilibrium point} or an \textit{equilibrium} of the closed-loop system \eqref{closed-loop2} if the right-hand side of \eqref{closed-loop2} vanishes at $( \tilde\theta^*, \omega^*, \mu^*, \tilde\eta^*,\tilde\zeta^*, \tilde \alpha^*)$.
\end{definition}

%\begin{definition}
%	A point $(P^{g*}_j, \mu_j^*)$ is \emph{primal-dual optimal} if $P^{g*}_j$ is optimal 
%	for \eqref{eq:opt.1} and $\mu_j^*$ is optimal for its dual problem.
%\end{definition}
The next theorem shows that two problems \textbf{P1} and \textbf{P2} are solved simultaneously at the equilibrium.
\begin{theorem}
	\label{thm:optimality}
	At the equilibrium of \eqref{closed-loop2}, the following assertions are true.
	\begin{enumerate}
		\item $\tilde\eta_j^*=\tilde\zeta_j^*=\tilde\alpha_j^*=0$, which implies that $q_j(t)$ is accurately estimated. 
		\item System frequency is restored to its nominal value, i.e. $\omega_j^*=0$ for all $j \in N $.
		\item The marginal controllable load costs satisfy $\mu^*_j =\mu^*_k$ for all $j, k \in N$.
%		\item The equilibrium point $(\tilde\theta^*, \omega^*, \mu^*, \tilde\eta^*,\tilde\zeta^*, \tilde \alpha^*)$ is unique, with $\tilde\theta^*$ being unique up to (equilibrium) reference angles $\tilde\theta_0$.
	\end{enumerate}
\end{theorem}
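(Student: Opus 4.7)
\textbf{Proof plan for Theorem~\ref{thm:optimality}.} The plan is to extract the three assertions in order by exploiting the algebraic equations obtained from setting the right-hand sides of \eqref{closed-loop2} to zero. First, \eqref{closed2_1} gives $\omega_i^* = \omega_j^*$ for every $(i,j)\in E$, and connectedness of ${\cal G}$ forces a common value $\omega_j^* \equiv \omega^*$. Left-multiplying the vector form of \eqref{closed2_2} by $\mathbf{1}^T$ annihilates the Laplacian term (since $\mathbf{1}^T L = 0$) and leaves $\omega^* \sum_j \beta_j^{-1} = 0$, so $\omega^* = 0$; this is assertion~(ii). Substituting $\omega_j^* = 0$ back into \eqref{closed2_2} yields $L\mu^* = 0$, and connectedness of the communication graph ${\cal H}$ then makes $\mu^*$ proportional to $\mathbf{1}$, which is assertion~(iii).

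For assertion~(i), I would first compare the $\dot\omega_j$-equation with \eqref{closed2_3}: after setting $\omega_j^* = 0$, the two right-hand sides coincide except for the extra term $-\tilde\eta_j^*$ appearing in \eqref{closed2_3}, so subtracting them yields $\tilde\eta_j^* = 0$ immediately, without needing to resolve the still-undetermined $\rho_{qj}^*$, $\mu_j^*$, or $\tilde\theta_{ij}^*$.

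The crux is then $\tilde\zeta_j^* = \tilde\alpha_j^* = 0$. Because $\zeta_j = \varphi_j + \tilde\zeta_j$ and $\varphi_j(t)$ evolves under the exosystem \eqref{varphi1}, $\zeta_j^*$ is genuinely time-varying even ``at equilibrium'', so the identity obtained from \eqref{closed2_4}$\,=0$ must hold for all $t$. Using $\tilde\eta_j^* = 0$ and $\zeta_j^* = \varphi_j(t)+\tilde\zeta_j^*$, this identity reads
\begin{equation*}
\bigl(A_j(\hat\alpha_j^*)-A_j(\alpha_j)\bigr)\varphi_j(t) + \bigl[A_j(\hat\alpha_j^*)-G_jR_j\bigr]\tilde\zeta_j^* = 0,\ \forall\, t.
\end{equation*}
Only the first summand carries time dependence; since $\varphi_j(t)$ is a superposition of $s_j$ sinusoids with distinct positive frequencies, it is persistently exciting, so the coefficient matrix $A_j(\hat\alpha_j^*)-A_j(\alpha_j)$ must vanish. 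Because only the last row of $A_j(\cdot)$ depends on $\alpha_j$ and its nonzero entries sit in the odd columns, this reduces to $\hat\alpha_j^* = \alpha_j$, i.e.\ $\tilde\alpha_j^* = 0$. The residual static equation then becomes $(A_j(\alpha_j)-G_jR_j)\tilde\zeta_j^* = 0$; by the design choice of $R_j$ and $G_j$, $A_j(\alpha_j)-G_jR_j$ is Hurwitz and hence nonsingular, giving $\tilde\zeta_j^* = 0$. Equation \eqref{closed2_5} is then automatically satisfied, and $\tilde A_j(\hat\alpha_j^*)\zeta_j^* = \tilde A_j(\alpha_j)\varphi_j(t) = q_j(t)$ justifies the phrase ``$q_j(t)$ is accurately estimated'' in~(i).

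The step I expect to be the main obstacle is the persistent-excitation argument: one must interpret ``equilibrium'' carefully because the exosystem state $\varphi_j$ is inherently oscillatory, and then argue that the $s_j$ distinct sinusoidal components of $\varphi_j(t)$ let one null the time-varying and the constant parts of the displayed identity separately. Once that is granted, the Hurwitzness of $A_j(\alpha_j)-G_jR_j$---a standard consequence of the internal-model-principle construction adopted here---closes the argument.
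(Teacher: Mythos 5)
Your proof is correct and its overall skeleton matches the paper's: assertions (ii) and (iii) are obtained exactly as in the paper (a common $\omega^*$ from connectedness, left-multiplication by $\mathbf{1}^T$ to annihilate $L$ and force $\omega^*=0$, then $L\mu^*=0$ and connectedness of $\mathcal{H}$), and $\tilde\eta_j^*=0$ follows, as in the paper, by subtracting the $\dot\omega_j$-equilibrium condition from the $\dot{\tilde\eta}_j$-one. Where you genuinely diverge is the $(\tilde\zeta_j^*,\tilde\alpha_j^*)$ block. The paper never forms your combined identity: it exploits the fact that $A_j(\hat\alpha_j^*)-A_j(\alpha_j)$ is nonzero only in its last row, so the first $\bar s_j-1$ rows of \eqref{equilibirum1_4} decouple into $\Psi[\tilde\zeta_{j1}^*,\dots,\tilde\zeta_{j,\bar s_j-1}^*]^T=\mathbf{0}$ with $\Psi$ nonsingular (as in \eqref{eq:e18}), and then pairs the last row with the $\dot{\hat\alpha}_j$-condition \eqref{equilibirum1_5} to eliminate $\tilde\zeta_{j,\bar s_j}^*$ and $\tilde\alpha_j^*$ jointly. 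You instead kill $\tilde\alpha_j^*$ first by persistent excitation of $\varphi_j$ and then $\tilde\zeta_j^*$ by nonsingularity of $A_j(\alpha_j)-G_jR_j$, never invoking \eqref{closed2_5}. Both routes are sound; yours makes explicit the richness-of-signal hypothesis that the paper's terse ``This implies $\tilde{\zeta}_{j,\bar s_j}^*=0$ and $\tilde{\alpha}_j^*=\mathbf{0}$'' leaves implicit --- both arguments silently require all amplitudes $q_{jk}\neq 0$ with distinct frequencies $a_{jk}$, otherwise only the output identity $\tilde A_j(\hat\alpha_j^*)\zeta_j^*=q_j(t)$, not $\tilde\alpha_j^*=0$ itself, is forced. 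Two minor points: Hurwitzness of $A_j(\alpha_j)-G_jR_j$ is more than you need and is not established in the paper --- nonsingularity follows elementarily from $\det(A_j(\alpha_j)-G_jR_j)=-\gamma\det\Psi\neq 0$; and the first $\bar s_j-1$ rows of your own identity already give $\tilde\zeta_{j1}^*=\dots=\tilde\zeta_{j,\bar s_j-1}^*=0$ independently of $\tilde\alpha_j^*$, so the persistent-excitation step is only ever needed for the single last row.
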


\begin{proof}[Proof of Theorem \ref{thm:optimality}]
	In an equilibrium, we have
	\begin{subequations}
		\label{equilibirum1}
		\begin{align}
			0 & =  \omega_i^*- \omega_j^*
			\label{equilibirum1_1}	\\
			\label{equilibirum1_2.0}
			0 &= \overline P_j^{in}+ \rho_q^* + \sum_{i: i\rightarrow j}B_{ij} {\tilde \theta}^*_{ij}
			-  \sum_{k: j\rightarrow k}B_{jk}{\tilde \theta}^*_{jk} - \frac{\mu_j^*}{\beta_j} - 	(d_j+D_j)\omega_j^*  \\
			\label{equilibirum1_2.1}
			0 &=-\sum\nolimits_{k\in N_{cj}}(\mu_j^*-\mu_k^*) + \omega_j^*/\beta_j\\		
			\label{equilibirum1_2}
			\begin{split}
			0 &=-{\tilde{\eta}}_j^* +\overline P_j^{in}+ \rho_q^* + \sum\nolimits_{i: i\rightarrow j}B_{ij} {\tilde \theta}^*_{ij}
			-  \sum\nolimits_{k: j\rightarrow k}B_{jk}{\tilde \theta}^*_{jk} \\
			&\ \ \ \ - {\mu_j^*}/{\beta_j} - 	(d_j+D_j)\omega_j^* 
			\end{split}\\
			\label{equilibirum1_4} 
			0 & =   (A_j(\hat\alpha_j^*)-A_j(\alpha_j)){\zeta}_j^* + (A_j(\alpha_j)-G_jR_j){\tilde{\zeta}}_j^* 
			-G_j {\tilde{\eta}}_j^*     \\
			\label{equilibirum1_5}
			0 & =   -k_\alpha \Lambda_j({\zeta}_j^*)(\tilde\eta_j^*+R_j\tilde\zeta_j^*)
		\end{align}
	\end{subequations}
	We have ${\tilde{\eta}}_j^*=0$ due to \eqref{equilibirum1_2} and \eqref{equilibirum1_2.0}. Then \eqref{matrix:A} yields 
	\bq
	\label{matrix:A2}
	A_j(\hat\alpha_j^*)-A_j(\alpha_j) &=&\left [ \begin{array}{*{20}{c}}
		\textbf{0}_{2s_j\times 1} & \textbf{0}_{2s_j}\\
		0  & \tilde\alpha_{j1}, 0, \cdots, \tilde\alpha_{js_j}, 0
	\end{array}  \right ]
	%
	%		\left [ \begin{array}{*{20}{c}}
	%		\textbf{0}_{(2s_j-1)\times 1} & \textbf{0}_{2s_j-1}\\
	%		\tilde\alpha_{j1}  & 0, \tilde\alpha_{j2}, 0, \cdots, \tilde\alpha_{js_j}, 0
	%		\end{array}  \right ]
	\eq
	and
	\begin{align}
		&A_j(\alpha_j)-G_jR_j= \left [ \begin{array}{*{20}{c}}
		\textbf{0}_{2s_j\times 1} & I_{2s_j}\\
		0  & \alpha_{j1}, 0, \cdots, \alpha_{js_j}, 0
	\end{array}  \right ] \nonumber\\
	&\qquad\qquad \qquad- [\textbf{0}^T, 1, \gamma]^T\cdot[r_{j1}, \cdots, r_{j,\bar s_j-1}, 1]\nonumber\\
	&=\left [ \begin{array}{*{20}{c}}
		\textbf{0}_{\bar s_j-2, 1} & I_{\bar s_j-2}& \textbf{0}_{\bar s_j-2, 1} \\
		-r_{j1}& -r_{j2}, \cdots, -r_{j,\bar s_j-1}& 0 \\
		-\gamma r_{i1}  & \alpha_{j1}-\gamma r_{j2}, -\gamma r_{j3}, \cdots, -\gamma r_{j,\bar s_j-1} & -\gamma
	\end{array}  \right ] 
	\end{align}
	Then the first $\bar s_j-1$ dimension of \eqref{equilibirum1_4} is rewritten as 
	\bq \label{eq:e18}
	&\underbrace{\left [ \begin{array}{*{20}{c}}
			\textbf{0}_{\bar s_j-2, 1} & I_{\bar s_j-2} \\
			-r_{j1}& -r_{j2}, \cdots, -r_{j,\bar s_j-1} 
		\end{array}  \right ]}_\Psi \left[\begin{array}{*{20}{c}} {\tilde{\zeta}}_{j1}^*\\ \vdots \\ \tilde{\zeta}_{j,\bar s_j-1}^* \end{array} \right]=\textbf{0} 
	\eq
	The first matrix in \eqref{eq:e18}, denoted by $\Psi$, is nonsingular. Hence we have $\left[{\tilde{\zeta}}_{j1}^*, \cdots, \tilde{\zeta}_{j,\bar s_j-1}^*\right]^T=\textbf{0}$. Denote $\tilde{\alpha}_j:=[\tilde{\alpha}_{j1}, \cdots, \tilde{\alpha}_{j,s_j}]^T$. Then the $\bar s_j$-th dimension of \eqref{equilibirum1_4} together with \eqref{equilibirum1_5} yield 
\begin{align}
	\Lambda^T_j({\zeta}_j)\tilde{\alpha}_j^*-\gamma\tilde{\zeta}_{j,\bar s_j}^*&\equiv 0 \nonumber\\
	\Lambda_j({\zeta}_j)\tilde{\zeta}_{j,\bar s_j}^*&\equiv\textbf{0}\nonumber
\end{align}
	This implies $\tilde{\zeta}_{j,\bar s_j}^*=0$ and $\tilde{\alpha}_j^*=\textbf{0}$. The first assertion is proved. 
	
	From the first assertion, we have 
\begin{align}
	\label{equilibrium:A}
	\rho_{qj}^*=-\tilde A_j(\hat\alpha_j^*){\zeta}_j^* + \tilde A_j(\alpha_j)\varphi_j=0
\end{align}
	From \eqref{equilibirum1_1}, we have $\omega_i^*=\omega_j^*=\omega_0$, with a constant $\omega_0$. Considering the compact form of \eqref{equilibirum1_2.1}, we have 
	\begin{align}
		\label{eq:mu}
		-L\mu^*+\omega_0\cdot \beta^{-1}=0
	\end{align}
	where $\beta^{-1}:=[\beta_1^{-1}, \cdots, \beta_n^{-1}]^T$. Multiply $\textbf{1}^T$ on both sides of \eqref{eq:mu}, and we have
%		\bq
%	L:=\left[\begin{array}{*{20}{c}} L_1\\ \vdots \\ L_n \end{array} \right], \ 
%	0=L_1+L_2+\cdots+L_{n-1}+L_n\nonumber
%	\eq
%	and $L_i$ is the $i$-{th} row of $L$.
%	Sum all rows of \eqref{eq:mu}, and we have
	\begin{align}\label{omega_0}
		-\textbf{1}^T\cdot L\mu^*+\omega_0\textbf{1}^T\cdot \beta^{-1}=0=\omega_0(\beta_1^{-1}+\cdots+\beta_n^{-1})
	\end{align}
	where $\textbf{1}$ is a vector with all elements as $1$, and the second equation is due to $\textbf{1}^T\cdot L=0$. Thus we have $\omega_0=0$ due to $\beta_j>0, \forall j$, which is the second assertion. \wang{Is this more explicit?}
	
	From \eqref{eq:mu}, we have $L\mu^*=0$. Equivalently, $\mu^*=\mu_0\cdot\textbf{1}$ with a constant $\mu_0$, implying the third assertion. 
\end{proof}

{In fact, the equilibrium $(\tilde\theta^*, \omega^*, \mu^*, \tilde\eta^*,\tilde\zeta^*, \tilde \alpha^*)$ is unique, with $\tilde\theta^*$ being unique up to reference angles $\theta_0$. As the optimization problem \eqref{eq:opt.1} is with a strongly convex objective function and linear constraints, its solution $P_j^l$ is unique. Then, $\mu_j^*$ is unique by \eqref{constant_2}. In Theorem \ref{thm:optimality}, we prove that $\omega_j^*=\tilde\eta_j^*=\tilde\zeta_j^*=\tilde\alpha_j^*=0$, which are also unique. If the angle of the reference node is set as a constant $\theta_0$, $\tilde\theta^*$ is also unique (see \cite[Theorem 2]{Distributed_II:Wang}). Thus, the equilibrium point of \eqref{closed-loop2} is unique.}

From the first assertion and invoking \eqref{transformation1}, we have ${\zeta}^*_j=\varphi_j, \hat\alpha^*_j=\alpha_j$, implying the variation $q_j(t)$ is accurately eliminated. Then \textbf{P2} is solved. From the third assertion, \textbf{P1} is solved. Therefore, \textbf{P1} and \textbf{P2} are solved simultaneously.

\subsection{Asymptotic stability}
In this subsection, we prove the asymptotic stability of the closed-loop system \eqref{closed-loop2} when the noise $w_j(t)$ is not considered. We start by transforming it to an equivalent form. 

Denote $\hat \eta_j:=\tilde{\eta}_j-M_j\omega_j$ and $\nu_j:=[\hat \eta_j,\  \tilde{\zeta}_j,\ \tilde{\alpha}_j]^T$. Then \eqref{closed-loop2} can be rewritten as 
\begin{subequations}
	\label{closed-loop3}
	\begin{align}
	\dot{\tilde \theta}_{ij}\ = &\  \omega_i- \omega_j
	\label{closed3_1}	\\
	\begin{split}
	\dot \omega_j       \   =&\ \frac{1}{M_j}\bigg( \overline P_j^{in}+ \rho_{qj} + \sum\nolimits_{i: i\rightarrow j} \! B_{ij} {\tilde \theta}_{ij}
	-  \sum\nolimits_{k: j\rightarrow k} \! B_{jk}{\tilde \theta}_{jk}  \\
	 &\ \  - \mu_j/\beta_j - 	d_j\omega_j  -D_j \omega_j \big) 
	 \end{split}
	 \\
	\label{closed3_2}
	\dot{\mu}_j   \   =&\ -\sum\nolimits_{k\in N_{cj}}(\mu_j-\mu_k) + \omega_j/\beta_j
	\\
	\label{closed3_3}
	\dot{\nu} _j      \     =&\  \phi_j (\nu_j, \omega_j) 
	\end{align}
\end{subequations}
where 
\begin{align}
	\phi_j (\nu_j, \omega_j)=\left [ 
	\begin{array}{c}
		- \hat{\eta}_j-M_j\omega_j\\
		\left(\begin{array}{l}
			(A_j(\hat\alpha_j)-A_j(\alpha_j)){\zeta}_j  -G_j ({\hat{\eta}}_j+M_j\omega_j) \\
			+ (A_j(\alpha_j)-G_jR_j){\tilde{\zeta}}_j\end{array}\right)\\ 
		-k_\alpha \Lambda_j({\zeta}_j)({\hat{\eta}}_j+M_j\omega_j+R_j\tilde\zeta_j)
	\end{array}  \right ]  \nonumber
\end{align}
It is obvious that if \eqref{closed-loop2} is stable, \eqref{closed-loop3} is also stable. Thus, we turn to prove the stability of \eqref{closed-loop3}.

Consider the subsystem $\nu_j$ , we have the following Lemma. 
\begin{lemma}
	\label{lemma:subsystem_mu}
	Consider the subsystem \eqref{closed3_3} and let $\omega_j\equiv 0$. Then for each $j\in N$, there exists a $C^1$ function $U_j(t,\nu_j)$ such that 
	\begin{align}
	\begin{split}
		&\underline U_j(\nu_j)\le U_j(t,\nu_j) \le  \overline U_j(\nu_j)  \\
		&\frac{\partial U_j(t,\nu_j)}{\partial t} + \frac{\partial U_j(t,\nu_j)}{\partial \nu_j} \phi_j(\nu_j, 0) \le -\left\|\nu_j\right\|^2\\
		&\left\|\frac{\partial U_j(t,\nu_j)}{\partial \nu_j}\right\|\le b_{j0} (\left\|\nu_j\right\|+\left\|\nu_j\right\|^3)
	\end{split}
	\end{align}
	for some constant $b_{j0}>0$ and positive definite and radially unbounded functions $\underline U_j(\nu_j), \overline U_j(\nu_j)$.
\end{lemma}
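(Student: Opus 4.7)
The plan is to construct $U_j(t,\nu_j)$ explicitly as a Lyapunov function of quadratic-plus-quartic form, exploiting the cancellation structure built into the adaptive internal-model controller. First I would unpack $\dot\nu_j=\phi_j(\nu_j,0)$: with $\omega_j\equiv 0$, the $\hat\eta_j$-equation reduces to the autonomous exponentially stable system $\dot{\hat\eta}_j=-\hat\eta_j$, while the coupled $(\tilde\zeta_j,\tilde\alpha_j)$-block takes the classical adaptive form
\begin{align*}
\dot{\tilde\zeta}_j &= (A_j(\alpha_j)-G_jR_j)\tilde\zeta_j + e_{\bar s_j}\Lambda_j^{\top}(\zeta_j)\tilde\alpha_j - G_j\hat\eta_j,\\
\dot{\tilde\alpha}_j &= -k_\alpha\Lambda_j(\zeta_j)(\hat\eta_j+R_j\tilde\zeta_j),
\end{align*}
where $\zeta_j=\tilde\zeta_j+\varphi_j(t)$, the vector $e_{\bar s_j}$ denotes the last standard basis vector of $\mathbb{R}^{\bar s_j}$, and the identity $(A_j(\hat\alpha_j)-A_j(\alpha_j))\zeta_j=e_{\bar s_j}\Lambda_j^{\top}(\zeta_j)\tilde\alpha_j$ follows from the single-nonzero-row structure identified in \eqref{matrix:A2}. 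The signal $\varphi_j(t)$ is uniformly bounded by the argument following \eqref{varphi2}, which is essential in every subsequent estimate.

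Next I would exploit the design guarantees on $R_j$ and $G_j$. By construction $A_j(\alpha_j)-G_jR_j$ is Hurwitz and, following the internal-model design in \cite{XU2016Output}, the triple $(A_j(\alpha_j)-G_jR_j,\,e_{\bar s_j},\,R_j)$ is strictly positive real, so the KYP lemma furnishes a symmetric $P_j>0$ satisfying both the Lyapunov equation $(A_j(\alpha_j)-G_jR_j)^{\top}P_j+P_j(A_j(\alpha_j)-G_jR_j)=-Q_j$ for some $Q_j>0$ and the matching condition $P_je_{\bar s_j}=R_j^{\top}$. With this $P_j$, the baseline candidate
\[
U_{j,0}(\nu_j)\ =\ \tfrac{1}{2}\tilde\zeta_j^{\top}P_j\tilde\zeta_j + \tfrac{1}{2k_\alpha}\|\tilde\alpha_j\|^2 + \tfrac{c_0}{2}\hat\eta_j^2
\]
enjoys the classical MRAC cancellation: the bilinear term $\tilde\zeta_j^{\top}P_je_{\bar s_j}(\Lambda_j^{\top}\tilde\alpha_j)=(R_j\tilde\zeta_j)(\Lambda_j^{\top}\tilde\alpha_j)$ from $\dot{\tilde\zeta}_j$ annihilates the symmetric contribution $-\tilde\alpha_j^{\top}\Lambda_jR_j\tilde\zeta_j$ from $\dot{\tilde\alpha}_j$, leaving the negative-definite core $-\tfrac{1}{2}\tilde\zeta_j^{\top}Q_j\tilde\zeta_j-c_0\hat\eta_j^2$ plus only two residual cross terms with $\hat\eta_j$, namely $-(R_j\tilde\zeta_j)\hat\eta_j$ and $-\tilde\alpha_j^{\top}\Lambda_j(\zeta_j)\hat\eta_j$.

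The main obstacle lies in the second of these residual cross terms. Writing $\Lambda_j(\zeta_j)=\Lambda_j(\tilde\zeta_j)+\Lambda_j(\varphi_j(t))$, Young's inequality splits it into a piece that is harmless because $\varphi_j$ is uniformly bounded, plus a super-quadratic piece of order $\|\tilde\alpha_j\|\|\tilde\zeta_j\|\|\hat\eta_j\|$ that no purely quadratic Lyapunov function can dominate. To absorb it I would augment $U_{j,0}$ by a quartic correction such as $\tfrac{c_1}{4}\|\tilde\alpha_j\|^4$ or $\tfrac{c_1}{4}(\tilde\zeta_j^{\top}P_j\tilde\zeta_j)^2$, together with a cross term weighted by $\varphi_j(t)$ that supplies the explicit $t$-dependence of $U_j$. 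Differentiating these additions produces negative quartic contributions in $(\tilde\zeta_j,\tilde\alpha_j)$ that, after one more round of Young's inequality and by choosing $c_0$ and $c_1$ large enough, dominate every remaining sign-indefinite term and yield the dissipation bound $\dot U_j\le -\|\nu_j\|^2$.

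Finally the sandwich and gradient estimates fall out of the construction. The resulting $U_j$ is a sum of quadratic and quartic polynomials in $\nu_j$ whose time-dependent coefficients are uniformly bounded in $t$, so it is automatically squeezed between two positive-definite, radially unbounded polynomials $\underline U_j,\overline U_j$, and its gradient is a polynomial of degree at most three in $\nu_j$, yielding exactly $\|\partial U_j/\partial\nu_j\|\le b_{j0}(\|\nu_j\|+\|\nu_j\|^3)$. I expect the most delicate step to be calibrating the quartic weight $c_1$ and the Young-inequality splittings so that the super-quadratic remainders are dominated uniformly in $t$; checking that the SPR matching $P_je_{\bar s_j}=R_j^{\top}$ is indeed realizable under the specific structure of $(A_j,G_j,R_j)$ is a technical but nontrivial prerequisite that I would address by citing the construction of \cite{XU2016Output}.
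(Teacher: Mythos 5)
Your unpacking of the $\omega_j\equiv 0$ subsystem is correct (the identity $(A_j(\hat\alpha_j)-A_j(\alpha_j))\zeta_j=e_{\bar s_j}\Lambda_j^{T}(\zeta_j)\tilde\alpha_j$ and the matching condition $P_je_{\bar s_j}=R_j^{T}$ are exactly the right structural ingredients), and the paper itself gives no proof to compare against — it defers to \cite[Lemma 3]{Xinghu:Distributed}. However, your sketch has a genuine gap: nothing in your construction produces the $-\|\tilde\alpha_j\|^2$ portion of the required dissipation $\dot U_j\le -\|\nu_j\|^2$. The MRAC cancellation you describe yields only $-\tfrac12\tilde\zeta_j^{T}Q_j\tilde\zeta_j-c_0\hat\eta_j^2$, which is merely negative \emph{semi}definite in $\nu_j$, and the quartic corrections cannot repair this: differentiating $\tfrac{c_1}{4}\|\tilde\alpha_j\|^4$ gives $c_1\|\tilde\alpha_j\|^2\tilde\alpha_j^{T}\dot{\tilde\alpha}_j$, and since $\dot{\tilde\alpha}_j=-k_\alpha\Lambda_j(\zeta_j)(\hat\eta_j+R_j\tilde\zeta_j)$ vanishes identically on the set $\{\hat\eta_j=0,\ \tilde\zeta_j=0\}$ regardless of $\tilde\alpha_j$, no Lyapunov derivative built from these dynamics contains a sign-definite term in $\tilde\alpha_j$ alone. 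The "negative quartic contributions in $(\tilde\zeta_j,\tilde\alpha_j)$" you invoke do not exist for the $\tilde\alpha_j$ coordinate.

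The missing idea is persistence of excitation. Attractivity of $\tilde\alpha_j=0$ holds only because the regressor $\Lambda_j(\varphi_j(t))$ is persistently exciting — which follows from $\varphi_j$ being generated by the marginally stable exosystem \eqref{varphi1} with distinct nonzero frequencies $a_{jk}$ — and it is precisely this PE property that a time-varying quadratic form (e.g.\ a cross term of the type $-\epsilon\,\tilde\alpha_j^{T}\!\int_t^{t+T}e^{-(s-t)}\Lambda_j(\varphi_j(s))\,ds\,(\cdot)$, or the converse-Lyapunov construction for the resulting uniformly exponentially stable linear time-varying comparison system) converts into an explicit $-\|\tilde\alpha_j\|^2$ term in $\dot U_j$. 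This is also the real reason $U_j$ must depend on $t$; attributing the time dependence to a bookkeeping device for the cubic cross term misses the point. Your handling of that cubic term $\|\tilde\alpha_j\|\|\tilde\zeta_j\|\|\hat\eta_j\|$ via quartic augmentation is plausible and is indeed what forces the gradient bound $b_{j0}(\|\nu_j\|+\|\nu_j\|^3)$, but without the PE step the central inequality of the lemma is unobtainable, since without it the origin of the $(\tilde\zeta_j,\tilde\alpha_j)$ subsystem need not even be asymptotically stable. A complete argument should follow the PE-based construction in \cite[Lemma 3]{Xinghu:Distributed} (or the adaptive-observer analysis of \cite{XU2016Output}) rather than a pure MRAC-plus-Young's-inequality calculation.
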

The proof of Lemma \ref{lemma:subsystem_mu} is similar to \cite[Lemma 3]{Xinghu:Distributed}, which is omitted here. 

Before giving the stability result, we first study the Euclidean norm of $\left\|\rho_{qj} \right\|$ and $\left\|\frac{\partial U_j(t,\nu_j)}{\partial \nu_j}\left(\phi_j (\nu_j, \omega_j)-\phi_j (\nu_j, 0)\right)\right\|$. 
For $\rho_{qj}$, 
\begin{align}
	&\left\|\rho_{qj} \right\|=\left\|\tilde A_j(\alpha_j){\varphi_j}-\tilde A_j(\hat\alpha_j)({\tilde\zeta_j}+\varphi_j)\right\| \nonumber\\
	& \le\left\| R_j\hat A_j(\tilde\alpha_j){\varphi_j}\right\| + \left\|\tilde A_j(\tilde\alpha_j+\alpha_j){\tilde\zeta_j}\right\|  \nonumber\\
	& \le \left\| R_j\right\| \left\| \hat A_j(\tilde\alpha_j)\right\| \left\| {\varphi_j}\right\| + \left\|R_j{\tilde\zeta_j} + R_jA_j(\tilde\alpha_j+\alpha_j){\tilde\zeta_j}\right\| \nonumber\\
	& \le \left\| R_j\right\| \left\| \hat A_j(\tilde\alpha_j)\right\| \left\| {\varphi_j}\right\|  + \left\|R_j{\tilde\zeta_j}\right\| + \left\|  R_jA_j(\alpha_j){\tilde\zeta_j}\right\| +\left\|  R_j\bar A_j(\tilde\alpha_j){\tilde\zeta_j} \right\| \nonumber\\
	& \le c_2(\left\|\nu_j\right\|+\left\|\nu_j\right\|^2)
\end{align}
where 
\begin{align}
	\hat A_j(\tilde\alpha_j)&= \left[ {\begin{array}{*{20}{c}}
			\tilde\alpha_{j1}&   0& \tilde\alpha_{j2}& 0& \cdots& \tilde\alpha_{js_j}& 0
	\end{array}} \right]\nonumber\\
	\bar A_j(\tilde\alpha_j)&= \left [ \begin{array}{*{20}{c}}
		\textbf{0}_{(\bar s_j-1)\times 1} & \textbf{0}_{\bar s_j-1,\bar s_j-1}\\
		\tilde\alpha_{j1}  & 0, \tilde\alpha_{j2}, 0, \cdots, \tilde\alpha_{js_j}, 0
	\end{array}  \right ] \nonumber\\
	c_2&\ge  \left\| R_j\right\| \left\| \hat A_j(\tilde\alpha_j)\right\|+\left\|R_j\right\|+\left\|  R_jA_j(\alpha_j)\right\|, \forall j\in N \nonumber
\end{align}
The last ``$\le$''  is due to the boundedness of  $\phi_j$. Define a set $\Omega_{\nu}:=\left\{\nu |\ \sum\nolimits_{j\in N}U_j(t,\nu_j) \le \tilde c \right\}$.  Since $U_j(t,\nu_j)$ is radially unbounded, there exists a constant $\overline c$ such that $\left\|v_j(t) \right\|\le \overline c$ for any $\nu\in\Omega_{\nu}$.  In $\Omega_{\nu}$, we have 
\begin{align}
	\label{norm_rho}
	\left\|\rho_{qj} \right\|=\left\|\tilde A_j(\alpha_j){\phi_j}-\tilde A_j(\hat\alpha_j){\zeta_j}\right\|\le c_3\left\|\nu_j\right\| 
\end{align}
for a suitable $c_3>0$ (defined in \eqref{c3_first}). 

Similarly,
\begin{align}
	\label{phi1}
	&\left\|\phi_j (\nu_j, \omega_j)-\phi_j (\nu_j, 0)\right\| =\left\| \left [ 
	\begin{array}{c}
		-M_j\omega_j\\
		-G_j M_j\omega_j\\ 
		-k_\alpha \Lambda_j({\zeta}_j)M_j\omega_j
	\end{array}  \right ] \right\|  \nonumber\\
	&\le \left(\left\| M_j\right\|+\left\| G_j\right\|\left\| M_j\right\| +k_{\alpha} \left\| M_j\right\|\left\| \nu_j\right\|\right) \left\| \omega_j\right\|  \le c_3\left\|\omega_j\right\|
\end{align}
From Lemma \ref{lemma:subsystem_mu}, we have
\begin{align}
	\label{norm_U}
	\left\|\frac{\partial U_j(t,\nu_j)}{\partial \nu_j}\right\|\le c_3\left\|\nu_j\right\|
\end{align}
Combining \eqref{phi1} and \eqref{norm_U}, we have
\begin{align}
	\label{Phi}
	&\left\|\frac{\partial U_j(t,\nu_j)}{\partial \nu_j}\left(\phi_j (\nu_j, \omega_j)-\phi_j (\nu_j, 0)\right)\right\|\le \frac{1}{2}\left\|\nu_j\right\|^2+\frac{1}{2}c_3^4\left\|\omega_j\right\|^2
\end{align}   
where
\begin{align}
	\label{c3_first}
	\begin{split}
	c_3&\ge \max \big\{1,\ c_2(1+\overline c),\ b_{j0}(1+\overline c^2), \\
	&\qquad\qquad\left\| M_j\right\|+\left\| G_j\right\|\left\| M_j\right\| +k_{\alpha} \left\| M_j\right\|\overline c \big\},\ \forall j\in N
	\end{split}
\end{align}
We make an assumption.
\begin{itemize}
	\item[\textbf{A1}:] The control parameter $d_j$ satisfies 
\end{itemize}
\begin{align}
\label{eq:dj}
d_j>\max \left\{\frac{1+2c_3^6}{2}-D_j,\ \frac{2c^2_3+1}{4c^2_3}+\frac{2c_3^6-c_3^4}{2c_3^2-2}+2c_3^2-D_j\right\}
\end{align}

A1 is easy to satisfy by letting $d_j$  large enough.
Denote the state variables of \eqref{closed-loop3} as $x=\left[\tilde{\theta}^T, \omega^T, {\mu}^T, \nu^T \right ]^T$ and $x_1=\big[\tilde{\theta}^T, \omega^T,$ ${\mu}^T \big ]^T$. Similar to Definition \ref{def:ep.1}, we have 
\begin{definition}
	\label{def:ep.2}
	A point $x^*$ is an \emph{equilibrium point} of the closed-loop system \eqref{closed-loop3} if the right-hand side of \eqref{closed-loop3} vanishes at $x^*$.
\end{definition}

Define a Lyapunov candidate function as
\begin{align}
	\label{Lyapunov}
	V(t,x_1,\nu)=\frac{1}{2c_3^2}V_1+V_2
\end{align}
where 
\begin{align}
	\label{Lyapunov2}
	V_1=\frac{1}{2}(x_1-x_1^*)^T\Gamma(x_1-x_1^*)
\end{align}
with $\Gamma:=\text{diag}\left(B, M, I_n \right)$,
\begin{align}
	\label{Lyapunov3}
	V_2=\sum\nolimits_{j\in N}U_j(t,\nu_j)
\end{align}

From Lemma \ref{lemma:subsystem_mu} and \eqref{Lyapunov2}, there are positive definite and radially unbounded functions $\underline V(x_1,\nu), \overline V(x_1,\nu)$ such that $\underline V(x_1,\nu)\le V(t,x_1,\nu) \le \overline V(x_1,\nu)$. Define a set $\Omega_{\overline V}=\left\{(x_1, \nu) |\ \overline  V(x_1,\nu) \le \tilde c \right\}$. We have $\forall (x_1,\nu)\in \Omega_{\overline V}, \text{then}\ \nu\in \Omega_{\nu}$ and $\left\|v_j(t) \right\|\le \overline c$.

Finally, the stability result is given.
\begin{theorem}
	\label{Theorem:stability}
	Assume A1 holds. Then every trajectory of \eqref{closed-loop3} $x(t)$ starting from $\Omega_{\overline V}$ converges to $x^*$ asymptotically.
\end{theorem}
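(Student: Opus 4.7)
The plan is to show that the Lyapunov candidate $V$ in \eqref{Lyapunov} is non-increasing along trajectories of \eqref{closed-loop3} and then invoke LaSalle's invariance principle restricted to the positively invariant compact set $\Omega_{\overline V}$. Since $V$ is radially unbounded, once $\dot V\le 0$ is established every trajectory starting in $\Omega_{\overline V}$ remains there, so the bounds \eqref{norm_rho} and \eqref{Phi}, which rely on the $\nu$--boundedness supplied by $\Omega_\nu$, are valid throughout the analysis.

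My first task is to compute $\dot V_1$ along the dynamics. Differentiating and substituting the $(\tilde\theta,\omega,\mu)$--equations, the tie-line contribution in $\omega^TM\dot\omega$ telescopes against $(\tilde\theta-\tilde\theta^*)^TB\dot{\tilde\theta}$ by the incidence-matrix identity on $\cal G$, after I eliminate $\overline P^{in}-\mu^*/\beta$ via the equilibrium relation (using $\rho_{qj}^*=0$ from Theorem~\ref{thm:optimality}). Analogously, the coupling $\omega_j/\beta_j$ appearing in $\dot\mu_j$ cancels the $\omega^T(\mu-\mu^*)/\beta$ term produced by $\omega^TM\dot\omega$, and $L\mu^*=0$ lets me replace $L\mu$ by $L(\mu-\mu^*)$. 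What survives is the clean expression
\begin{align*}
\dot V_1 \;=\; \omega^T\rho_q \;-\; \omega^T(D+d)\omega \;-\; (\mu-\mu^*)^TL(\mu-\mu^*).
\end{align*}
For $\dot V_2$, Lemma~\ref{lemma:subsystem_mu} gives the dissipation $-\|\nu\|^2$, and the perturbation estimate \eqref{Phi} absorbs the coupling with $\omega$ as $\tfrac12\|\nu\|^2+\tfrac12 c_3^4\|\omega\|^2$. Bounding $|\omega^T\rho_q|\le\tfrac12\|\omega\|^2+\tfrac{c_3^2}{2}\|\nu\|^2$ via \eqref{norm_rho} and Young's inequality and assembling $\dot V=\tfrac{1}{2c_3^2}\dot V_1+\dot V_2$ yields a bound of the form
\begin{align*}
\dot V \;\le\; -\tfrac14\|\nu\|^2 \;-\; \kappa\|\omega\|^2 \;-\; \tfrac{1}{2c_3^2}(\mu-\mu^*)^TL(\mu-\mu^*),
\end{align*}
with $\kappa>0$ under \textbf{A1}. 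The two branches of \eqref{eq:dj} emerge from two different Young splits --- the looser one needed to dominate $\omega^T\rho_q$, and the tighter one needed for the $\omega$--$\nu$ cross-terms inherited from \eqref{Phi}; pinning the constants in \textbf{A1} is the most tedious but routine step.

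The main obstacle is completing the LaSalle argument, because the Laplacian form $(\mu-\mu^*)^TL(\mu-\mu^*)$ is only positive semidefinite. On the largest invariant set within $\{\dot V=0\}$ I obtain $\omega\equiv 0$, $\nu\equiv 0$, and $\mu-\mu^*\in\ker L=\mathrm{span}(\mathbf{1})$, so $\mu=\mu^*+c\mathbf{1}$ for some scalar $c$. Substituting $\omega=0$ and $\rho_q=0$ into $\dot\omega_j=0$ and subtracting the corresponding equilibrium relation gives node-wise flow deviations equal to $c/\beta_j$; summing over $j\in N$ and exploiting the antisymmetry of the tie-line flow sum forces $c\sum_{j\in N}\beta_j^{-1}=0$, hence $c=0$ and $\mu=\mu^*$. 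With $\mu=\mu^*$ restored, the remaining $\dot\omega_j=0$ equations together with the uniqueness of $\tilde\theta^*$ discussed after Theorem~\ref{thm:optimality} pin down $\tilde\theta=\tilde\theta^*$. LaSalle's theorem then delivers asymptotic convergence of every trajectory starting in $\Omega_{\overline V}$ to $x^*$.
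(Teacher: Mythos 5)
Your proposal is correct and follows essentially the same route as the paper: the same Lyapunov function $V=\tfrac{1}{2c_3^2}V_1+V_2$, the same use of Lemma~\ref{lemma:subsystem_mu} together with the bounds \eqref{norm_rho} and \eqref{Phi}, the same role for \textbf{A1}, and a concluding LaSalle argument (your direct telescoping computation of $\dot V_1$ gives exactly what the paper obtains via the integral-of-the-Jacobian device, since the Jacobian of $h$ is constant). The one place you go beyond the paper is the invariance step: the paper simply asserts that the limit set satisfies $\mu=\mu^*$, whereas you explicitly rule out $\mu-\mu^*=c\mathbf{1}$ with $c\neq 0$ by substituting $\omega\equiv 0$, $\rho_q\equiv 0$ into $\dot\omega_j=0$ and summing over nodes --- a detail worth keeping, as $(\mu-\mu^*)^TL(\mu-\mu^*)$ is only positive semidefinite.
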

\wang{As the equilibrium point of \eqref{closed-loop2} is unique, the equilibrium point of \eqref{closed-loop3} is also unique. \eqref{closed-loop2} and \eqref{closed-loop3} are equivalent. }

\begin{proof}[Proof of Theorem \ref{Theorem:stability}]

	Define the following function
	\begin{equation}
	h( x_1) = \left[ {\begin{array}{*{20}{c}}
		BC^T \omega \\
		\overline P^{in} - \beta^{-1}\mu  -(D+d) \omega - CB\tilde\theta \\
		-L\mu+\beta^{-1}\omega   
		\end{array}} \right]  
	\end{equation}

	The derivative of $V_1$ is 
	\begin{align}
		\label{Lyapunov2_derivative}
		\dot V_1=(x_1-x_1^*)^Th( x_1)+ \sum\limits_{j\in N}{\omega_j \left(\tilde A_j(\alpha_j){\varphi_j}-\tilde A_j(\hat\alpha_j){\zeta_j}\right)}
	\end{align}
	The first part of $\dot V_1$ is 
	\begin{align}
		& (x_1-x_1^*)^Th(x_1) \nonumber\\
		& =  \int_0^1 (x_1-x_1^*)^T \frac{\partial }{\partial y}h( y(s)) (x_1-x_1^*) ds + (x_1-x_1^*)^T h(x_1^*)\nonumber\\
		& \le  \frac{1}{2}\int_0^1 (x_1-x_1^*)^T \left[\frac{\partial^T }{\partial y} h( y(s)) + \frac{\partial }{\partial y} h( y(s)) \right] (x_1-x_1^*) \text{d}s \nonumber\\
		\label{dot Wk1}
		& = \int_0^1 (x_1-x_1^*)^T \left[H(y(s)) \right] (x_1-x_1^*) \text{d}s
	\end{align}
	where $y(s)=x_1^*+s(x_1-x_1^*)$. The second equation is from the fact that $h(x_1)-h(x_1^*) = \int_0^1 \frac{\partial }{\partial y}h( y(s)) (x_1-x_1^*) \text{d}s$. The inequality is due to either $h(x_1^*) =0$ or $h(x_1^*) < 0, x_1\ge 0$, i.e. $(x_1-x_1^*)^T h(x_1^*)\le0$.
	\begin{align}
		\frac{\partial h(x_1) }{\partial x_1}  = \left[ {\begin{array}{*{20}{c}}
				0&       BC^T&                0\\
				-CB&       -(D+d)&             - \beta^{-1}\\
				0&       \beta^{-1}& -L
		\end{array}} \right]
	\end{align}
	where $D=\text{diag}(D_i)$,  $d=\text{diag}(d_i)$, $C$ is the incidence matrix of the communication graph.
	
	Finally, $H$ in \eqref{dot Wk1} is
	\begin{align}
		H&=\frac{1}{2}\left[\frac{\partial^T }{\partial x_1} h(x_1)  + \frac{\partial }{\partial x_1} h(x_1) \right]  =	\left[ {\begin{array}{*{20}{c}}
				0&       0&                0\\
				0&       -(D+d)&           0\\
				0&       0&                -L
		\end{array}} \right]\nonumber
	\end{align}
	The second part of $\dot V_1$ is
	\begin{align}
		&\sum\nolimits_{j\in N}{\omega_j \left(\tilde A_j(\alpha_j){\phi_j}-\tilde A_j(\hat\alpha_j){\zeta_j}\right)}\nonumber\\
		&\quad \le \frac{1}{2}\left\|\omega\right\|^2 + \frac{1}{2}\sum\nolimits_{j\in N}(\tilde A_j(\alpha_j){\varphi_j}-\tilde A_j(\hat\alpha_j){\zeta_j})^2\nonumber\\
		&\quad \le \frac{1}{2}\left\|\omega\right\|^2 + \frac{1}{2}c^2_3\left\|\nu\right\| 
	\end{align}
	where the last inequality is due to \eqref{norm_rho}.
	
	Thus, 
	\begin{align}
		\label{V1_dot}
		\dot V_1\le & \int_0^1 (x_1-x_1^*)^T \left[H(y(s)) \right] (x_1-x_1^*) \text{d}s + \frac{1}{2}\left\|\omega\right\|^2 + \frac{1}{2}c^2_3\left\|\nu\right\| 
	\end{align}
	
	The derivative of $V_2$ is
	\begin{align}
		\label{Lyapunov3:derivative}
		\dot V_2&=\sum\nolimits_{j\in N}\left(\frac{\partial U_j(t,\nu_j)}{\partial t} + \frac{\partial  U_j(t,\nu_j)}{\partial \nu_j} \phi_j(\nu_j, \omega_j)\right) \nonumber\\
		&=\sum\nolimits_{j\in N}\left(\frac{\partial U_j(t,\nu_j)}{\partial t} + \frac{\partial  U_j(t,\nu_j)}{\partial \nu_j} \phi_j(\nu_j, 0)\right)\nonumber\\
		&\quad +\sum\nolimits_{j\in N}\left( \frac{\partial  U_j(t,\nu_j)}{\partial \nu_j} \left(\phi_j(\nu_j, \omega_j)-\phi_j(\nu_j, 0)\right)\right) \nonumber\\
		&\le -\left\|\nu \right\|^2+\frac{1}{2}\left\|\nu\right\|^2+\frac{1}{2}c_3^4\left\|\omega\right\|^2\nonumber\\
		&=-\frac{1}{2}\left\|\nu\right\|^2+\frac{1}{2}c_3^4\left\|\omega\right\|^2
	\end{align}
	where the inequality is due to Lemma \ref{lemma:subsystem_mu} and \eqref{Phi}.

	In $\Omega_{\overline V}$, the derivative of $V$ is 
	\begin{align}
	\label{Lyapunov_derivative}
	\dot V
	&\le \frac{1}{2c_3^2}\int_0^1 (x_1-x_1^*)^T \left[H(y(s)) \right] (x_1-x_1^*) \text{d}s + \frac{1}{4c_3^2}\left\|\omega\right\|^2 \nonumber\\
	& + \frac{1}{4c_3^2}\sum\nolimits_{j\in N}(\tilde A_j(\alpha_j){\phi_j}-\tilde A_j(\hat\alpha_j){\zeta_j})^2 -\frac{1}{2}\left\|\nu\right\|^2+\frac{1}{2}c_3^4\left\|\omega\right\|^2 \nonumber\\
	&\le -\frac{1}{4}\left\|\nu\right\|^2 +\frac{1}{2c_3^2}\int_0^1 (x_1-x_1^*)^T \left[ H(y(s)) \right] (x_1-x_1^*) \text{d}s \nonumber\\
	&+ \frac{1+2c_3^6}{4c_3^2}\left\|\omega\right\|^2   
	\end{align}
	Define $\tilde H$ as 
	$\tilde H:=\left[ {\begin{array}{*{20}{c}}
			0&       0&                0\\
			0&       -(D+d)+\frac{1+2c_3^6}{2}I_n&           0\\
			0&       0&                -L
	\end{array}} \right]. $
	
	Then we have
	\bq
	\dot{V}& \le& -\frac{1}{4}\left\|\nu\right\|^2 +\frac{1}{2c_3^2}\int_0^1 (x_1-x_1^*)^T \; \tilde H \;  (x_1-x_1^*) \text{d}s
	\eq
	
	It is obvious that $\tilde{H}\le0$ holds if 
\begin{align}
	\label{d_j}
	-(D+d)+\frac{1+2c_3^6}{2}I_n&<0
\end{align}
	where $I_n$ is an $n$-dimensional identity matrix. Indeed, Assumption A1 guarantees that  \eqref{d_j} holds.
	
	By LaSalle's invariance principle, we can prove that the trajectory $x(t)$ converges to the largest invariant subset of 	
	\begin{align}
		W_1=\left\{x| \nu^*=0, \omega=\omega^*=0, \mu=\mu^*  \right\}.\nonumber
	\end{align}
	
	Next we will prove that  the convergence is to an equilibrium point. Since $\omega=\omega^*$ are constants, $\tilde{\theta}=C^T\omega^*$ are also constants. Then by \cite[Corollary 4.1]{Khalil:Nonlinear}, $x(t)$ will converge to its equilibrium point $x^*$ asymptotically.	
\end{proof}

%The detailed proof of Theorem \ref{Theorem:stability} is given in the appendix due to the space limit. Here a proof outline is given. First we give two parts of Lyapunov function
%\begin{align}
%	\label{Lyapunov2}
%	V_1=\frac{1}{2}(x_1-x_1^*)^T\Gamma(x_1-x_1^*)
%\end{align}
%where $\Gamma:=\text{diag}\left(B, M^{-1}, I_n \right)$.
%\begin{align}
%	\label{Lyapunov3}
%	V_2=\sum\nolimits_{j\in N}U_j(t,\nu_j)
%\end{align}
%Then the Lyapunov function is 
%\begin{align}
%	\label{Lyapunov}
%	V(x_1,\nu)=\frac{1}{2c_3^2}V_1+V_2
%\end{align}
%where $c_3$ is a constant, whose value is given in the Appendix.
%
%It is proved that $\dot V \le 0$ and $(x_1,\nu)$ is bounded. Then by LaSalle's invariance principle, we can prove $x(t)$ will converge to its equilibrium point $x^*$ asymptotically.

\section{Robustness Analysis}
\label{Section Robust}
\subsection{Robustness Against Uncertain Parameter $D_j$}
In the controller \eqref{controller_varying}, the exact value of $D_j$ is difficult to know, and may even change. However, we claim that the inaccuracy of $D_j$ does not influence the equilibrium point of the closed-loop system \eqref{closed-loop2} and its stability, as we explain.

We first consider the equilibrium point. Suppose the estimation  of $D_j$ is $\hat{D}_j$ and the estimation error is $\Delta D_j:=\hat{D}_j-{D}_j$. As $D_j>0$, we assume its estimation $\hat{D}_j > 0$. \wang{I think it is reasonable to assume $\hat{D}_j>0$. Although we do not know the exact value of ${D}_j$, we know it is positive. Thus, it is reasonable to assume its estimation is also positive. }
Then \eqref{closed2_3} can be rewritten as 
\begin{align}
\label{D_uncertainty}
	\begin{split}
	\dot{\tilde{\eta}}_j  & =  -{\tilde{\eta}}_j  +\overline P_j^{in}+\rho_{qj} + \sum\nolimits_{i: i\rightarrow j} \! B_{ij} {\tilde \theta}_{ij} -  \sum\nolimits_{k: j\rightarrow k} \! B_{jk}{\tilde \theta}_{jk} 
	\\
	&\qquad  -D_j \omega_j - \Delta D_j \omega_j  - \mu_j/\beta_j - d_j\omega_j
	\end{split}
\end{align}
Since $\omega_j$ vanishes at equilibrium,  $\Delta D_j$ does not influence the equilibrium point of the closed-loop system 
\eqref{closed2_1}-\eqref{closed2_2}, \eqref{D_uncertainty}, \eqref{closed2_4}-\eqref{closed2_5}. \wang{$\omega_j$ vanishes can be proved following \eqref{eq:mu} and \eqref{omega_0}. In fact, from \eqref{equilibirum1_1} and \eqref{equilibirum1_2.1}, we can prove $\omega_j^*$=0. }

Next, we discuss stability. When $\Delta {D}_j$ is considered, \eqref{closed3_3} is rewritten as 
\begin{align}
\label{new_nu}
\dot{\nu} _j  &= \left [ 
\begin{array}{c}
	- \hat{\eta}_j-(M_j+\Delta
	D_j)\omega_j\\
	\left(\begin{array}{l}
		(A_j(\hat\alpha_j)-A_j(\alpha_j)){\zeta}_j  -G_j ({\hat{\eta}}_j+M_j\omega_j) \\
		+ (A_j(\alpha_j)-G_jR_j){\tilde{\zeta}}_j\end{array}\right)\\ 
	-k_\alpha \Lambda_j({\zeta}_j)({\hat{\eta}}_j+M_j\omega_j+R_j\tilde\zeta_j)
\end{array}  \right ] 
\end{align}
Suppose $\tilde x(t)$ are state variables of $\eqref{closed3_1}-\eqref{closed3_2}, \eqref{new_nu}$, and $\tilde x^*$ is an equilibrium point of $\tilde x(t)$. 

\begin{itemize}
	\item[\textbf{A2}:] The parameter $d_j$ satisfies \eqref{eq:dj}, where $c_3$ is given by 
\end{itemize}
\bq
c_3&\ge& \max \big\{1,\ c_2(1+\overline c),\ b_{j0}(1+\overline c^2), \nonumber\\
&&\qquad\quad\left\| M_j+\Delta D_j\right\|+\left\| G_j\right\|\left\| M_j\right\| +k_{\alpha} \left\| M_j\right\|\overline c \big\}. \nonumber
\eq
We have the following result.

\begin{corollary}
	\label{lemma_uncetainty_converg}
	Assume A2 holds, every trajectory  $\tilde x(t)$  of  \eqref{closed2_1}-\eqref{closed2_2}, \eqref{D_uncertainty}, \eqref{closed2_4}-\eqref{closed2_5} starting from $\Omega_{\overline V}$ converges to the equilibrium point $\tilde x^*$ asymptotically.
\end{corollary}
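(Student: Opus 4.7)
The plan is to repeat the Lyapunov/LaSalle argument of Theorem \ref{Theorem:stability} but with the single dynamical change caused by $\Delta D_j$ carefully tracked through the estimates, and then invoke Assumption A2 to restore the key negative-definiteness condition.

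First I would check the equilibrium structure. Under the transformation $\hat\eta_j = \tilde\eta_j - M_j\omega_j$, the substitute for \eqref{closed3_3} is exactly \eqref{new_nu}: the only effect of $\Delta D_j$ appears as an extra term $-\Delta D_j\,\omega_j$ in the first component of $\phi_j(\nu_j,\omega_j)$. Because $\omega_j^\ast=0$ at every equilibrium (this follows from the same chain \eqref{eq:mu}--\eqref{omega_0} that was used in Theorem \ref{thm:optimality}, since \eqref{closed2_1} and \eqref{closed2_2} are unchanged), the $\Delta D_j$ term vanishes at $\tilde x^\ast$. Hence the equilibrium set of the perturbed closed loop coincides with the one analyzed before, and uniqueness (up to the reference angle) is inherited.

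Next I would re-derive the three estimates \eqref{norm_rho}, \eqref{phi1}, \eqref{norm_U} that feed into the bound \eqref{Phi}. Estimate \eqref{norm_rho} is untouched because $\rho_{qj}$ is unaffected by $D_j$; estimate \eqref{norm_U} still comes from Lemma \ref{lemma:subsystem_mu}, which only uses the $\omega_j\equiv 0$ dynamics of $\nu_j$ and is thus not modified. The only estimate that changes is \eqref{phi1}: the first row of $\phi_j(\nu_j,\omega_j)-\phi_j(\nu_j,0)$ becomes $-(M_j+\Delta D_j)\omega_j$ instead of $-M_j\omega_j$, which merely requires replacing $\|M_j\|$ by $\|M_j+\Delta D_j\|$ in the constant bound. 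Assumption A2 is precisely the enlargement of $c_3$ that absorbs this replacement, so the combined bound \eqref{Phi} retains the same form $\tfrac12\|\nu_j\|^2+\tfrac12 c_3^4\|\omega_j\|^2$ with the new $c_3$.

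With the updated $c_3$ I would use the identical Lyapunov function \eqref{Lyapunov}. The derivative of $V_1$ is unchanged because the $(\tilde\theta,\omega,\mu)$-dynamics are identical and the cross term in \eqref{V1_dot} only involves $\rho_{qj}$. The derivative of $V_2$ proceeds exactly as in \eqref{Lyapunov3:derivative} using Lemma \ref{lemma:subsystem_mu} and the new \eqref{Phi}, yielding the same expression $-\tfrac12\|\nu\|^2+\tfrac12 c_3^4\|\omega\|^2$. Summing as in \eqref{Lyapunov_derivative} gives
\begin{align*}
\dot V \le -\tfrac14\|\nu\|^2+\tfrac{1}{2c_3^2}\!\int_0^1\!(\tilde x_1-\tilde x_1^\ast)^T\tilde H(y(s))(\tilde x_1-\tilde x_1^\ast)\,\mathrm{d}s,
\end{align*}
with the same matrix $\tilde H$. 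Assumption A2 enforces \eqref{eq:dj} with the new $c_3$, so $-(D+d)+\tfrac{1+2c_3^6}{2}I_n<0$ and $\tilde H\le 0$ still holds. The remainder of the argument, convergence to the largest invariant subset with $\nu=0$, $\omega=\omega^\ast=0$, $\mu=\mu^\ast$, followed by \cite[Corollary 4.1]{Khalil:Nonlinear}, transfers verbatim.

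The only nontrivial step is verifying that no other term in the Lyapunov inequality secretly picks up a $\Delta D_j$ contribution; this is the place where one has to look carefully. Since the only path by which $\Delta D_j$ enters \eqref{closed-loop3} is through the first row of $\phi_j$, and that row does not appear in $\dot V_1$ at all (it sits inside $V_2$ via $U_j$), the bookkeeping closes cleanly and A2 is exactly the needed sufficient condition.
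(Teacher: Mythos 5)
Your proposal is correct and follows exactly the route the paper intends: the paper omits the proof of Corollary \ref{lemma_uncetainty_converg}, stating only that it ``can be easily proved following the same proof of Theorem \ref{Theorem:stability}'' once $d_j$ is large enough, and your argument is precisely that proof with the $\Delta D_j$ term tracked through the first component of $\phi_j$ and absorbed into the enlarged $c_3$ of Assumption A2. You supply more bookkeeping than the paper does (in particular, the observation that $\Delta D_j$ never enters $\dot V_1$), but the approach is the same.
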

Note that one can always choose a large enough $d_j$. Hence Corollary \ref{lemma_uncetainty_converg} can be easily proved following the same proof of Theorem \ref{Theorem:stability}, which is omitted here. 

In summary,  the unknown parameter $D_j$ does not influence the equilibrium point and its stability,indicating that our controller is robust against parameter uncertainty.  
\subsection{Robustness Against Unknown  Disturbances $w_j(t)$}
To attenuate the effect of $w_j(t)$, one needs to guarantee that, for a given constant $\gamma>0$, the robust performance index $\left\|\omega_j(t)\right\|^2\le\gamma\left\|w_j(t)\right\|^2$ holds. \cite[Chapter 16]{zhou1996robust}, \cite{Qin2017Input}. It means that, for a bounded external disturbance $w_j(t)$, the frequency deviation is always bounded by the given $\gamma$. A smaller $\gamma$ results in a better attenuation performance. The lower bound of $\gamma$ (if it exists) is referred to as $L_2$ gain of the system.

When considering $w_j(t)$, the closed-loop system is
\begin{subequations}
	\label{closed-loop4}
	\begin{align}
	\dot{\tilde \theta}_{ij}\  \ =&\ \  \omega_i- \omega_j
	\label{closed4_1}	\\
	\begin{split}
	\dot \omega_j       \ \  =&\ \ \frac{1}{M_j}\bigg( \overline P_j^{in} + w_j(t)+ \rho_{qj} + \sum_{i: i\rightarrow j} \! B_{ij} {\tilde \theta}_{ij}
	-  \sum_{k: j\rightarrow k} \! B_{jk}{\tilde \theta}_{jk}  \\
	 &\ \ - \mu_j/\beta_j - 	d_j\omega_j  -D_j \omega_j \big) 
	 \end{split}
	 \\
	\label{closed4_2}
	\dot{\mu}_j  \  \  =&\ \ -\sum\nolimits_{k\in N_{cj}}(\mu_j-\mu_k) + \omega_j/\beta_j
	\\
	\label{closed4_3}
	\dot{\nu} _j      \ \    =&\ \ \tilde\phi_j (\nu_j, \omega_j, w_j)
	\end{align}
\end{subequations}
where
\bq
%\label{new_nu2}
\tilde \phi_j &= \left [ 
\begin{array}{c}
	- \hat{\eta}_j-M_j\omega_j-w_j(t)\\
	\left(\begin{array}{l}
		(A_j(\hat\alpha_j)-A_j(\alpha_j)){\zeta}_j  -G_j ({\hat{\eta}}_j+M_j\omega_j) \\
		+ (A_j(\alpha_j)-G_jR_j){\tilde{\zeta}}_j\end{array}\right)\\ 
	-k_\alpha \Lambda_j({\zeta}_j)({\hat{\eta}}_j+M_j\omega_j+R_j\tilde\zeta_j)
\end{array}  \right ]. \nonumber
\eq

By an analysis similar to \eqref{phi1}, we have 
\begin{align}
\label{phi2}
\left\|\phi_j (\nu_j, \omega_j, w_j)-\phi_j (\nu_j, 0, 0)\right\| 
%	=\left\| \left [ 
%	\begin{array}{c}
%	-M_j\omega_j-w_j\\
%	-G_j M_j\omega_j\\ 
%	-k_\alpha \Lambda_j({\zeta}_j)M_j\omega_j
%	\end{array}  \right ] \right\|  \nonumber\\
%	&\le \left(\left\| M_j\right\|+\left\| G_j\right\|\left\| M_j\right\| +k_{\alpha} \left\| M_j\right\|\left\| \nu_j\right\|\right) \left\| \omega_j\right\| + \left\| w_j\right\|  \nonumber\\
\le c_3\left\|\omega_j\right\| + \left\| w_j\right\|
\end{align}
where  $c_3$ is same as that in \eqref{c3_first}. Then 
\begin{align}
	\label{Phi2}
	\begin{split}
	&\left\|\frac{\partial U_j(t,\nu_j)}{\partial \nu_j}\left(\phi_j (\nu_j, \omega_j, w_j)-\phi_j (\nu_j, 0, 0)\right)\right\| \\
	&\qquad\qquad\qquad \le \frac{1}{2}\left\|\nu_j\right\|^2  +\frac{2c_3^6-c_3^4}{2c_3^2-2}\left\|\omega_j\right\|^2 +\frac{1}{2}\left\|w_j\right\|^2
	\end{split}
\end{align}
Using  $V_1$, $V_2$  defined in \eqref{Lyapunov2} and \eqref{Lyapunov3} again,  we have
\begin{align}
%\label{V1_dot}
\begin{split}
	\dot V_1\ \le &\ \int_0^1 (x_1-x_1^*)^T \left[H(y(s)) \right] (x_1-x_1^*) \text{d}s + \frac{1}{2} c^2_3\left\|\nu\right\|^2 \\
	\ &\ + \frac{1}{2}\left\|\omega\right\|^2 + \frac{1}{4c^2_3}\left\|\omega\right\|^2 + c^2_3\left\|w\right\|^2
\end{split}
\end{align}
and
\begin{align}
%	\label{Lyapunov3:derivative}
\dot V_2&\le -\left\|\nu \right\|^2+\frac{1}{2}\left\|\nu_j\right\|^2+\frac{2c_3^6-c_3^4}{2c_3^2-2}\left\|\omega\right\|^2 +\frac{1}{2}\left\|w\right\|^2 \nonumber\\
&=-\frac{1}{2}\left\|\nu\right\|^2+\frac{2c_3^6-c_3^4}{2c_3^2-2}\left\|\omega\right\|^2 +\frac{1}{2}\left\|w\right\|^2
\end{align}
Using the same Lyapunov function as in \eqref{Lyapunov} gives 
\begin{align}
	\begin{split}
	\dot V\ \le&\ -\frac{1}{4}\left\|\nu\right\|^2 -\frac{1}{2c_3^2}(\mu-\mu^*)^TL(\mu-\mu^*)+ \left\|w\right\|^2\\
	\ &\ -  \frac{1}{2c_3^2}\omega^T \left(D+d-\frac{2c^2_3+1}{4c^2_3}I_n-\frac{2c_3^6-c_3^4}{2c_3^2-2}I_n\right)\omega 
	\end{split}
\end{align}
Thus, we have 
\begin{align}
\label{robust1}
\left\|\omega_j\right\|^2\le \gamma \left\|w_j\right\|^2
\end{align}
where
\begin{align}
	\label{robust2}
	\frac{1}{\gamma}&=\min\left\{\frac{1}{2c_3^2}\left(D_j+d_j-\frac{2c^2_3+1}{4c^2_3}-\frac{2c_3^6-c_3^4}{2c_3^2-2}\right)\right\}, \forall j\in N
\end{align}
We have  $\frac{1}{\gamma}>1$ due to Assumption A1.

Inequalities \eqref{robust1} and \eqref{robust2} indicate that the controller is robust to  $w_j(t)$ with the $L_2$-gain $\gamma<1$. In practice, the amplitudes of $w_j(t)$ are usually quite small. As a consequence, the deviation of $\omega_j$ is also small. According to \eqref{robust2},  a larger $d_j$ is helpful to  enhance the attenuation performance. 

\begin{figure}[t]
	\centering
	\includegraphics[width=0.42\textwidth]{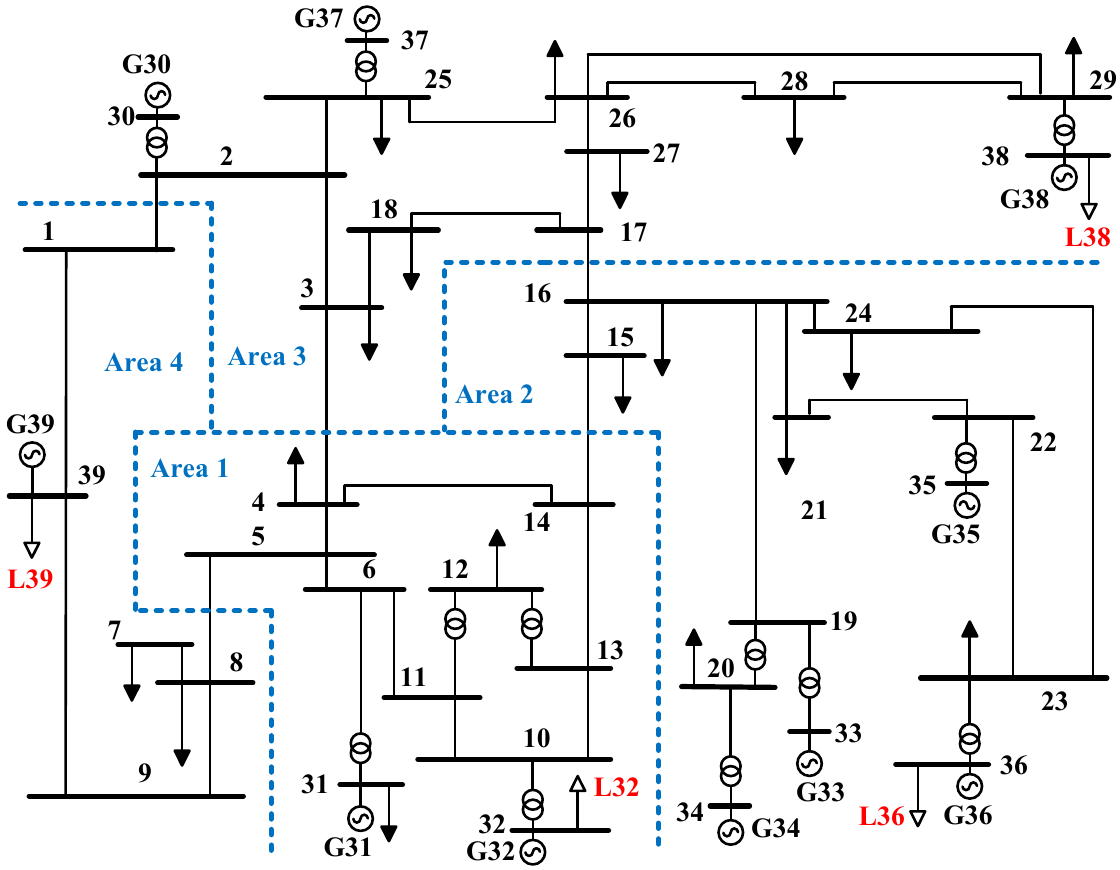}
	\caption{The New England 39-bus system}
	\label{fig:system}
\end{figure}

The analysis above shows that the controller is  robust in terms of uncertain parameter $D_j$ and unknown disturbance $w_j(t)$. Hence \textbf{P3} is resolved.

\section{Case studies}
\subsection{System Configuration}
To verify the performance of the proposed controller, the New England 39-bus system with 10 generators as shown in Figure \ref{fig:system} is used for test. All simulations are implemented in the commercial power system simulation software PSCAD. 

We add four (aggregate) controllable  loads to the system by connecting them at buses 32, 36, 38 and 39, respectively. Their initial values  are set as $(74.1,\ 52.7,\ 52.7,\ 105.4)$ MW. Then the  system is divided into four control areas, as shown in Figure \ref{fig:system}. Each area contains a controllable load. \textcolor{black}{The communication graph is undirected and set as  $L32\leftrightarrow{}{}L36\leftrightarrow{}{}L38\leftrightarrow{}{}L39\leftrightarrow{}{}L32$. For simplicity, we assume the communication is perfect with no delay or loss.}
\begin{figure}[t]
	\centering
	\includegraphics[width=0.49\textwidth]{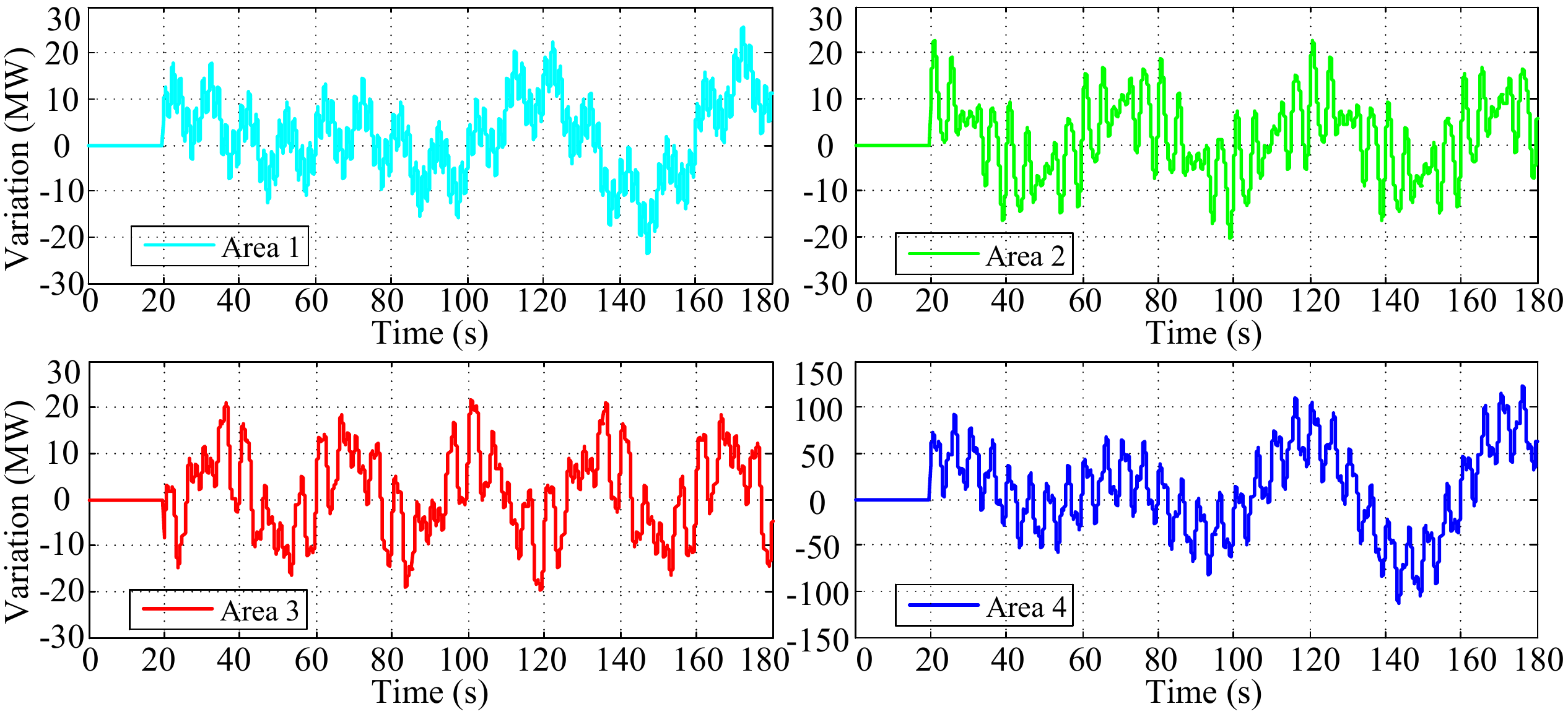}
	\caption{Power variation of renewable resources in each area}
	\label{fig:variation}
\end{figure}
In our tests, two cases are studied based on different data: 1) self-generated data in PSCAD; 2) the real data of an offshore wind farm. The variation in the first case is faster than the latter. {\color{black}Parameters used in the controller \eqref{controller_varying} are given in Table \ref{tab_DistConstraints}.
\begin{table}[h]
	\centering
	\footnotesize
	\caption{ {Parameters used in the controller \eqref{controller_varying}}}
	\label{tab_DistConstraints}
	\begin{tabular}{c c c c c}
		\hline 
		Area $i$ & 1 & 2 & 3 & 4 \\
		\hline
		$\beta$   & 1   & 0.8 & 0.8 & 0.4   \\
		$d$   & 1000   & 1000 & 1000 & 1000   \\
		$D$   & 50   & 50 & 50 & 80   \\
		$\gamma$    & 1 & 1 & 1 & 1  \\
		$k_\alpha$  & 10 & 10 & 10 & 10  \\
		\hline
	\end{tabular}
\end{table}
The value of $B_{ij}$ is given in Table \ref{tab_Bij}.
\begin{table}[h]
	\centering
	\footnotesize
	\caption{ {Parameters used in the controller \eqref{controller_varying}}}
	\label{tab_Bij}
	\begin{tabular}{c c c c c c}
		\hline 
		Line & (1, 2) & (1, 3) & (1, 4) & (2, 3) & (2, 4)\\
		\hline
		$B_{ij}$  & 46  & 47 & 89 & 112 & 24   \\
		\hline
	\end{tabular}
\end{table}
The $R_j$ used in \eqref{controller_varying} for each area is $R_j=[1, 6, 15, 20, 15, 6, 1]$. The corresponding polynomial is $(x+1)^6$, where all the roots are $-1$, satisfying the requirement. 
}

\subsection{Self-generated data}
In the first case, the varying power in each area is shown in Figure \ref{fig:variation}. Note that the functions of the four curves in Figure \ref{fig:variation} are unknown to the controllers. In the controller design, we choose $s_j=3$ in \eqref{sin_functions}. Note that this  does not mean the actual power variation (curves in  Figure \ref{fig:variation}) are superposition of only three sinusoidal functions. 

In this subsection, $\overline P_j^{in}$ in each area are $(15,\ 15,\ 15,\ 15)$ MW, which are the prediction of aggregated load. It should be pointed out that the prediction is not accurate. The offset errors are  $(1,\ 1,\ 1,\ 5)$ MW, which are relatively small but unknown. 
We compare the performances using controller \eqref{controller_constant} and \eqref{controller_varying}. Both the two controllers are applied at $t=20$s. The system frequencies  are compared in Figure \ref{fig:load_frequency}.
\begin{figure}[t]
	\centering
	\includegraphics[width=0.32\textwidth]{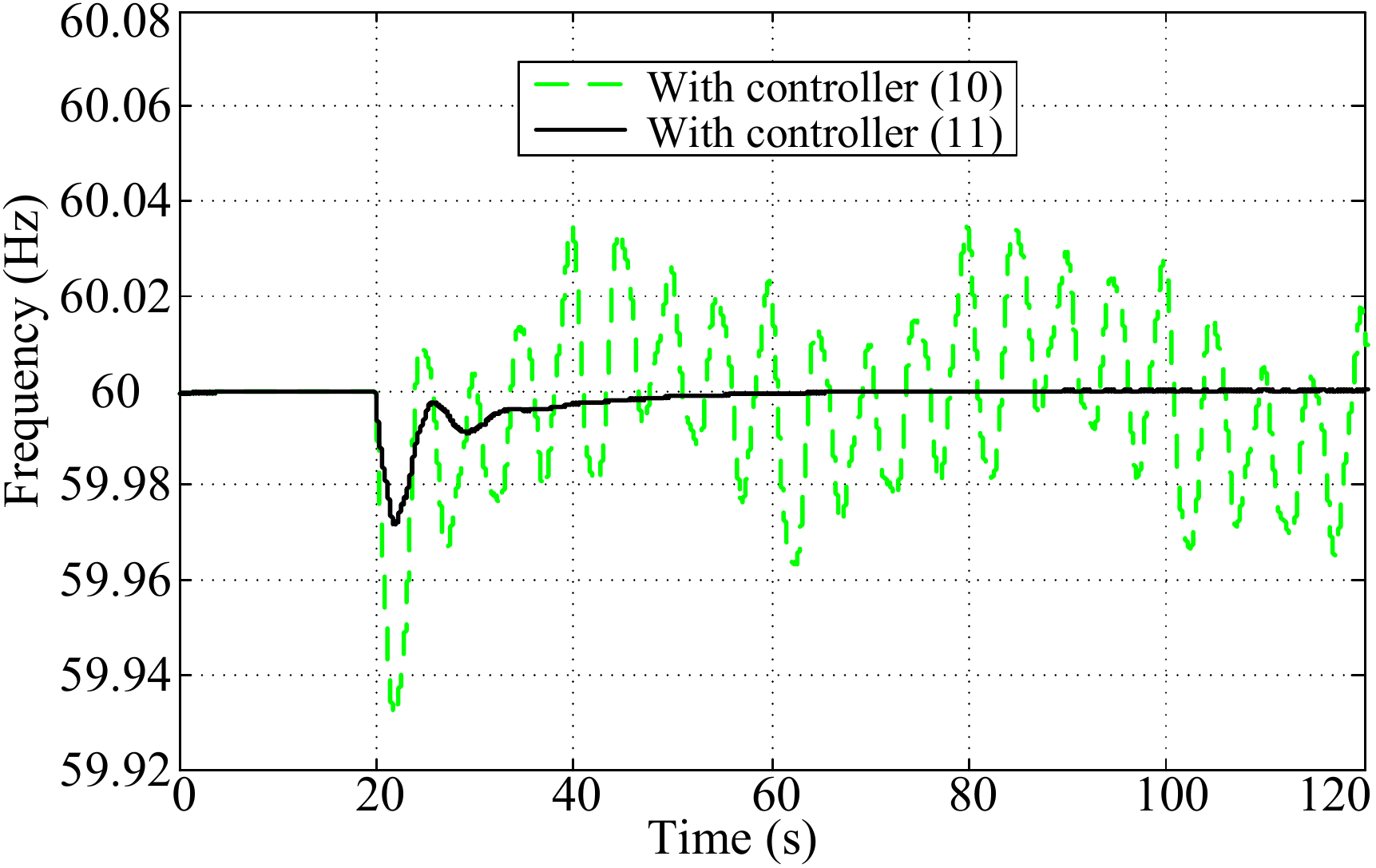}
	\caption{System frequencies under two controls}
	\label{fig:load_frequency}
\end{figure}

The green line stands for the frequency dynamics using \eqref{controller_constant}. The frequency oscillation is fierce and nadir is quite low. The black line stands for frequency dynamics using \eqref{controller_varying}. In this situation, the nominal frequency is recovered fast without oscillation. The frequency nadir is much higher than that using \eqref{controller_constant}. This result confirms  that our controller can still work well when $\overline P_j^{in}\neq0$. 

The dynamics of $\mu_j$ are given in the left part of Figure \ref{fig:mu_with_load}. The green line stands for $\mu_j$ using \eqref{controller_constant}, while the black line stands for that using \eqref{controller_varying}. $\mu_j$ of each area converges to a same value, which implies the optimality is achieved, i.e., $\overline P_j^{in}$ is balanced economically. 
%\fliu{There should be four pairs of lines, right?}
\begin{figure}[t]
	\centering
	\includegraphics[width=0.49\textwidth]{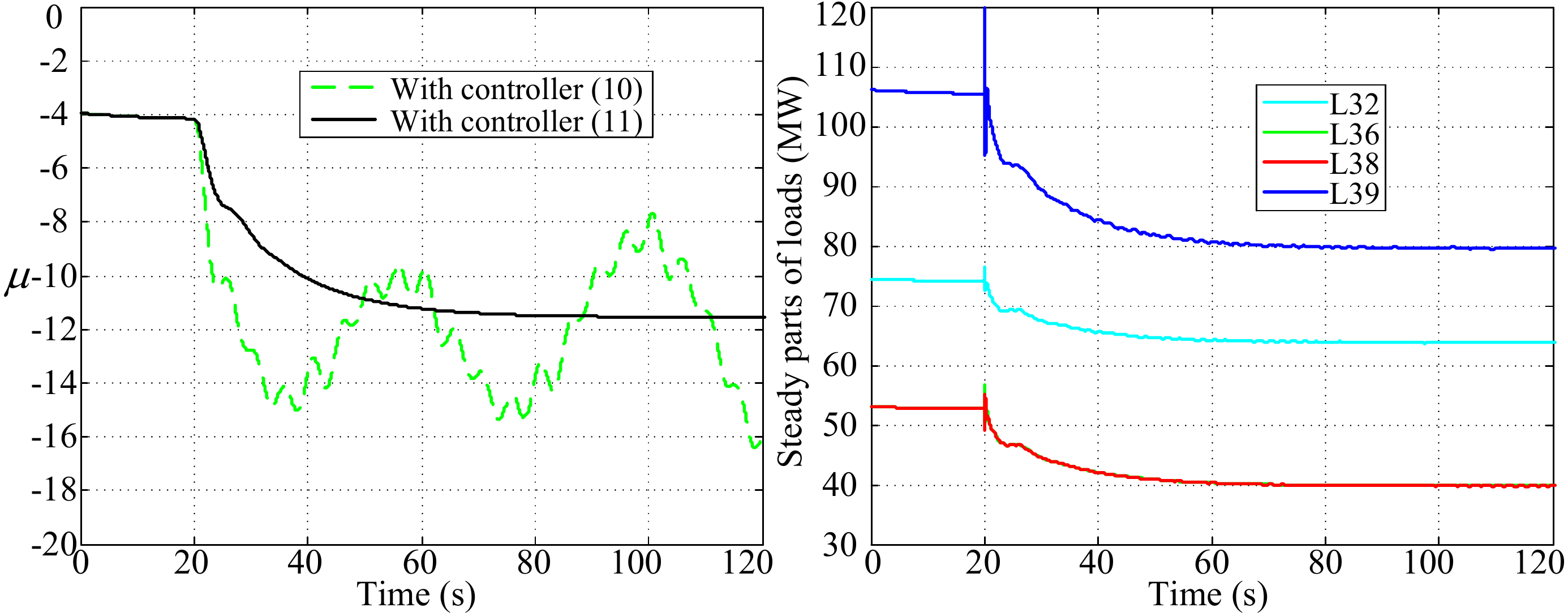}
	\caption{Dynamics of $\mu$ and steady parts of controllable loads}
	\label{fig:mu_with_load}
\end{figure}

In this scenario, the controllable load in each area is also composed of two parts: a steady part to balance $\overline P_j^{in}$ and a  variation part to mitigate the effects of $\tilde{q}_j(t)$. 
The steady part of controllable load is given in the right part of Figure \ref{fig:mu_with_load}. The controllable loads in the steady state are $(63.8,\ 39.8,\ 39.8,\ 79.6)$ MW. The result is the same as that obtained using CVX to solve the optimization counterpart (i.e., OLC problem \eqref{eq:opt.1}). 
%\begin{figure}[t]
%	\centering
%	\includegraphics[width=0.32\textwidth]{mu_l_with_load.pdf}
%	\caption{Dynamics of steady parts of controllable loads}
%	\label{fig:l_with_load}
%\end{figure}

To demonstrate  it more clearly, we define an error index $Err_j$ as below.
\begin{align}
	\label{error_index}
	Err_j:=\frac{\int_{t_0}^{t_1} \sqrt{\left(\tilde A_j(\hat\alpha_j){\zeta}_j-\tilde{q}_j(t)\right)^2} \text{d}t }{\int_{t_0}^{t_1} \sqrt{\left(\tilde{q}_j(t)\right)^2} \text{d}t }
\end{align}

The performance of  controllable load tracking  power variation in each area is given in Figure \ref{fig:load_variations_attenuation}. We can find that the controllable loads coincide to the power variations with high accuracy. 
%We also show the error between variation and internal model output, which is shown in Figure \ref{fig:error_without_load}. The error is still very small. 
Again, the error index $Err_j$ with $t_0=20$ and $t_1=120$ in this situation are $(0.0084,\ 0.0026,\ 0.0057,\ 0.0019)$, which are also very small. 
\begin{figure}[t]
	\centering
	\includegraphics[width=0.49\textwidth]{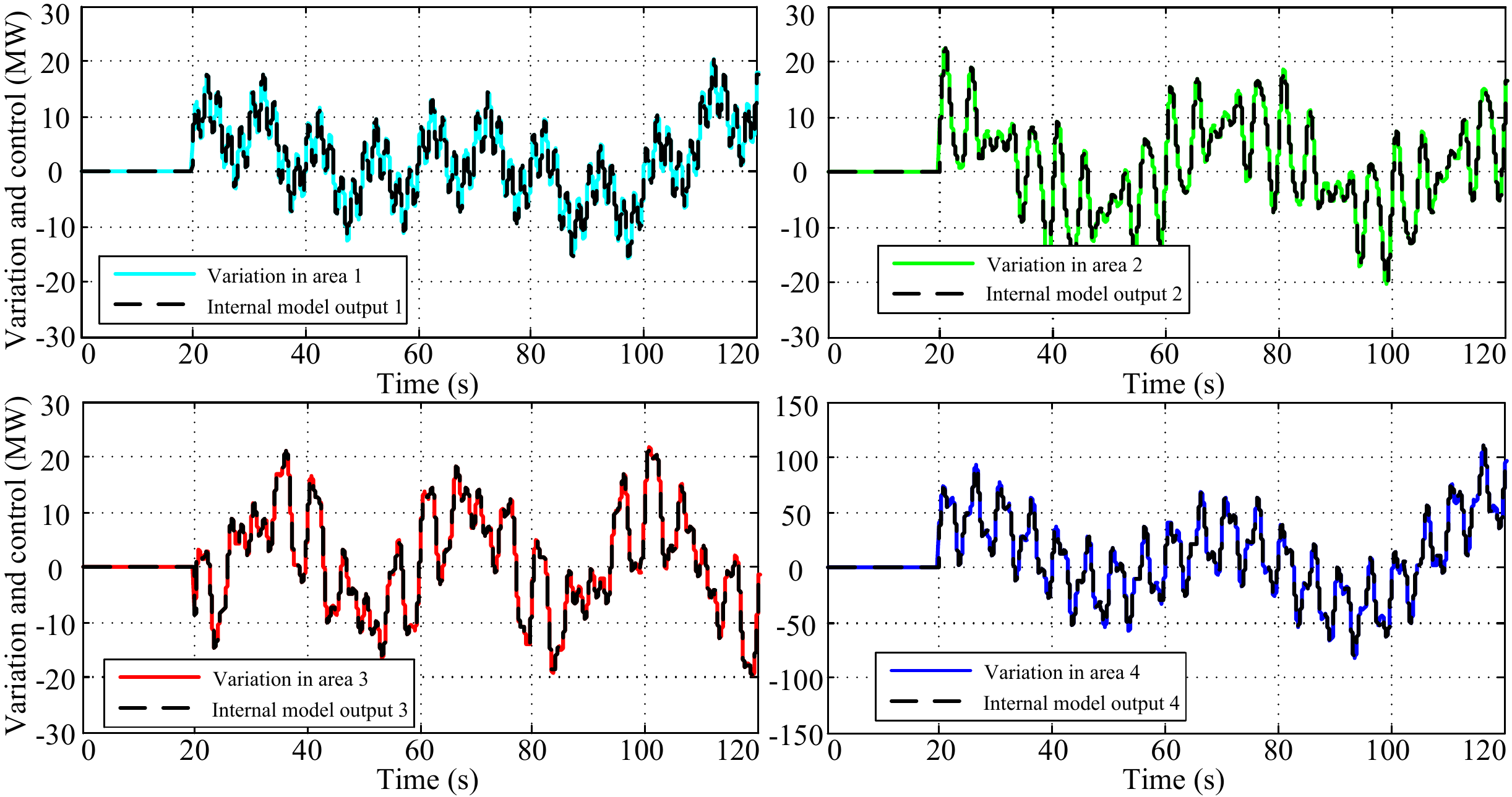}
	\caption{Variation and internal model output with load increases}
	\label{fig:load_variations_attenuation}
\end{figure}

%\begin{figure}[t]
%	\centering
%	\includegraphics[width=0.48\textwidth]{error_with_load.pdf}
%	\caption{Error between variation and internal model output}
%	\label{fig:error_with_load}
%\end{figure}

\subsection{Performance under Unknown Disturbances}
To test the performance of our controller under high-frequency unknown  disturbances, we add random noise $\tilde{w}(t)$ on $\tilde{q}(t)$ into the testing system, which takes the form of $\tilde{w}(t)=[20, 20, 20, 100]\times \text{rand}(t)\ \text{MW}$
%\begin{align}
%	\tilde{w}(t)=\left [ 
%	\begin{array}{c}
%		20\times \text{rand}(t)\\
%		20\times \text{rand}(t)\\ 
%		20\times \text{rand}(t)\\
%		100\times \text{rand}(t)
%	\end{array}  \right ] \text{MW} \nonumber
%\end{align}
, with rand(t) as a function generating a random number between $[0,\ 1]$ at time $t$. In the simulation, a random number is generated every 0.01s.
The load control command and the power variations are given Figure \ref{fig:noise_input_variation}. As the frequency of external disturbance is quite high, the internal model control is not able to follow it accurately. As a consequence, there exist obvious tracking errors. The system frequency is shown in  Figure \ref{fig:noise_frequency}. The inset zooms into the frequency dynamics between 140s-160s, when the system converges to the steady state. The maximal frequency deviation is smaller than 0.003Hz, demonstrating that the unknown disturbance is well attenuated by the proposed controller.

\begin{figure}[t]
	\centering
	\includegraphics[width=0.49\textwidth]{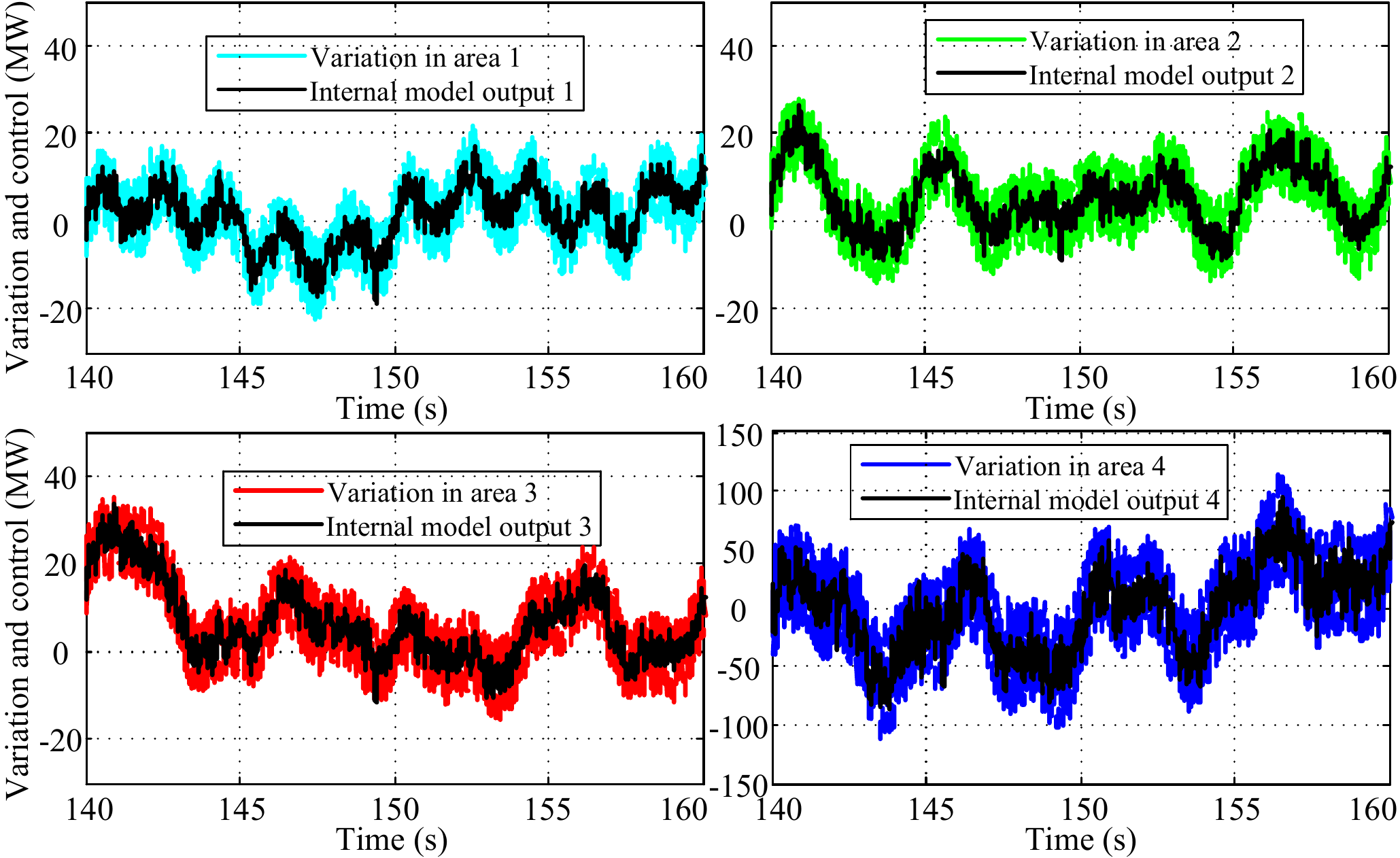}
	\caption{Variation and internal model output with noise}
	\label{fig:noise_input_variation}
\end{figure}

%\begin{figure}[t]
%	\centering
%	\includegraphics[width=0.49\textwidth]{noise_error}
%	\caption{Error between variation and internal model output}
%	\label{fig:noise_error}
%\end{figure}

\begin{figure}[t]
	\centering
	\includegraphics[width=0.32\textwidth]{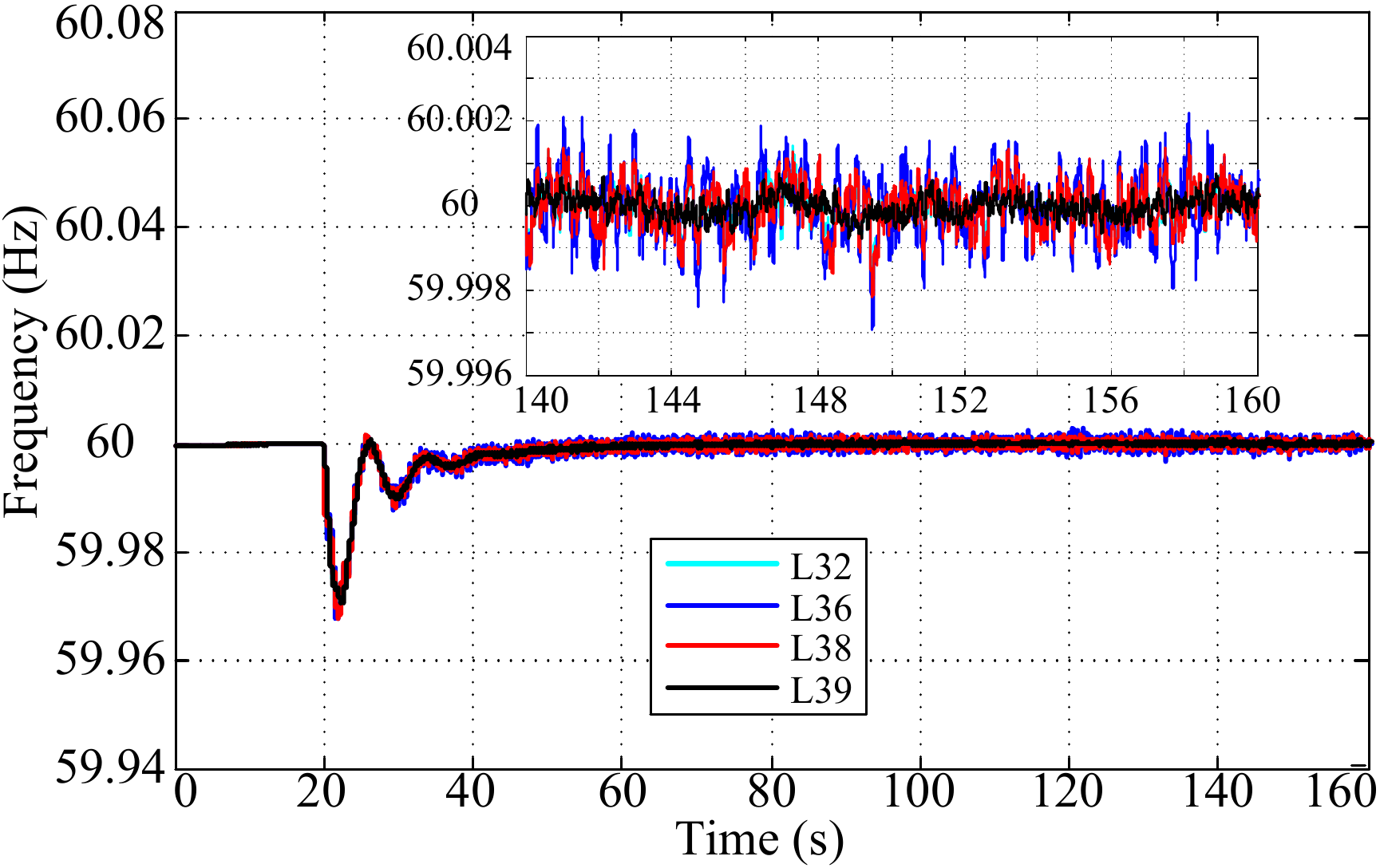}
	\caption{Frequency under external noise}
	\label{fig:noise_frequency}
\end{figure}

{\color{black}
\subsection{Simulation with Real Data}
\begin{figure}[t]
	\centering
	\includegraphics[width=0.48\textwidth]{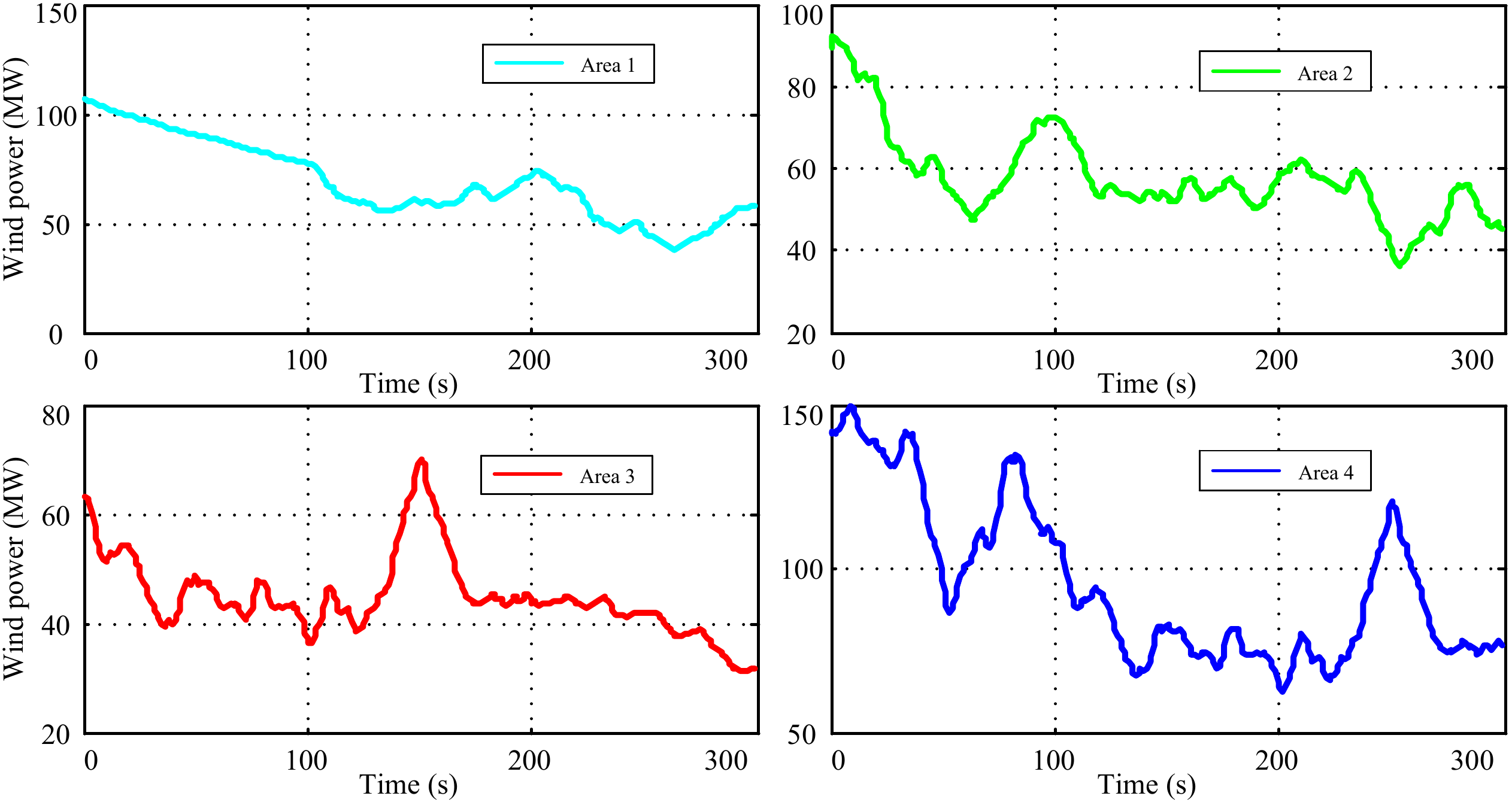}
	\caption{Wind power in each area}
	\label{fig_winddata}
\end{figure}

In this subsection, we use 300s data points for each area (one data point per second) to illustrate the effectiveness of our controller, which come from a real offshore wind farm. The data is available via the link \cite{website2018wind}. Due to agreement with the data provider, its is for personal use only. The wind power in each area is shown in Figure \ref{fig_winddata}, which is added in the simulation at $t=10$s. The power prediction in each area, i.e. $\overline P_j^{in}$, is $(72, 60, 49, 120)$MW respectively. The frequency dynamics using controller \eqref{controller_constant} and \eqref{controller_varying} with the real data are given in Figure \ref{fig_frequency_realdata}. Similar to that in Figure \ref{fig:load_frequency}, the frequency under the controller \eqref{controller_constant} varies and cannot be restored to the nominal value due to the variation of wind power. On the contrary, the frequency is very smooth when controller \eqref{controller_varying} is used. 
The performance of  controllable load tracking wind power variation in each area is given in Figure \ref{fig_error_realdata}. We can find that the controllable loads still coincide with the variations with high accuracy under the proposed controller. 

\begin{figure}[t]
	\centering
	\includegraphics[width=0.3\textwidth]{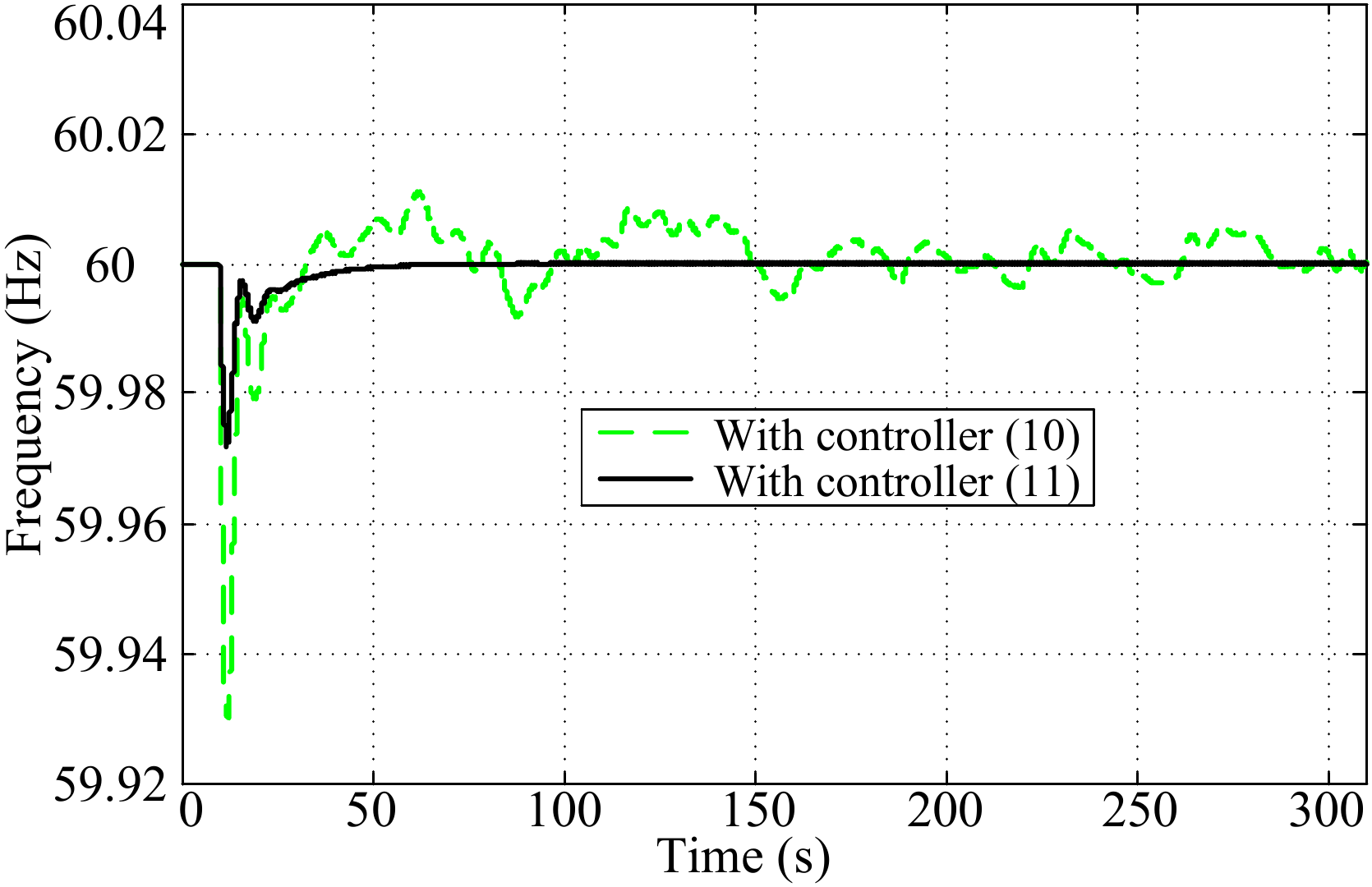}
	\caption{Frequency dynamics with real data}
	\label{fig_frequency_realdata}
\end{figure}

\begin{figure}[t]
	\centering
	\includegraphics[width=0.48\textwidth]{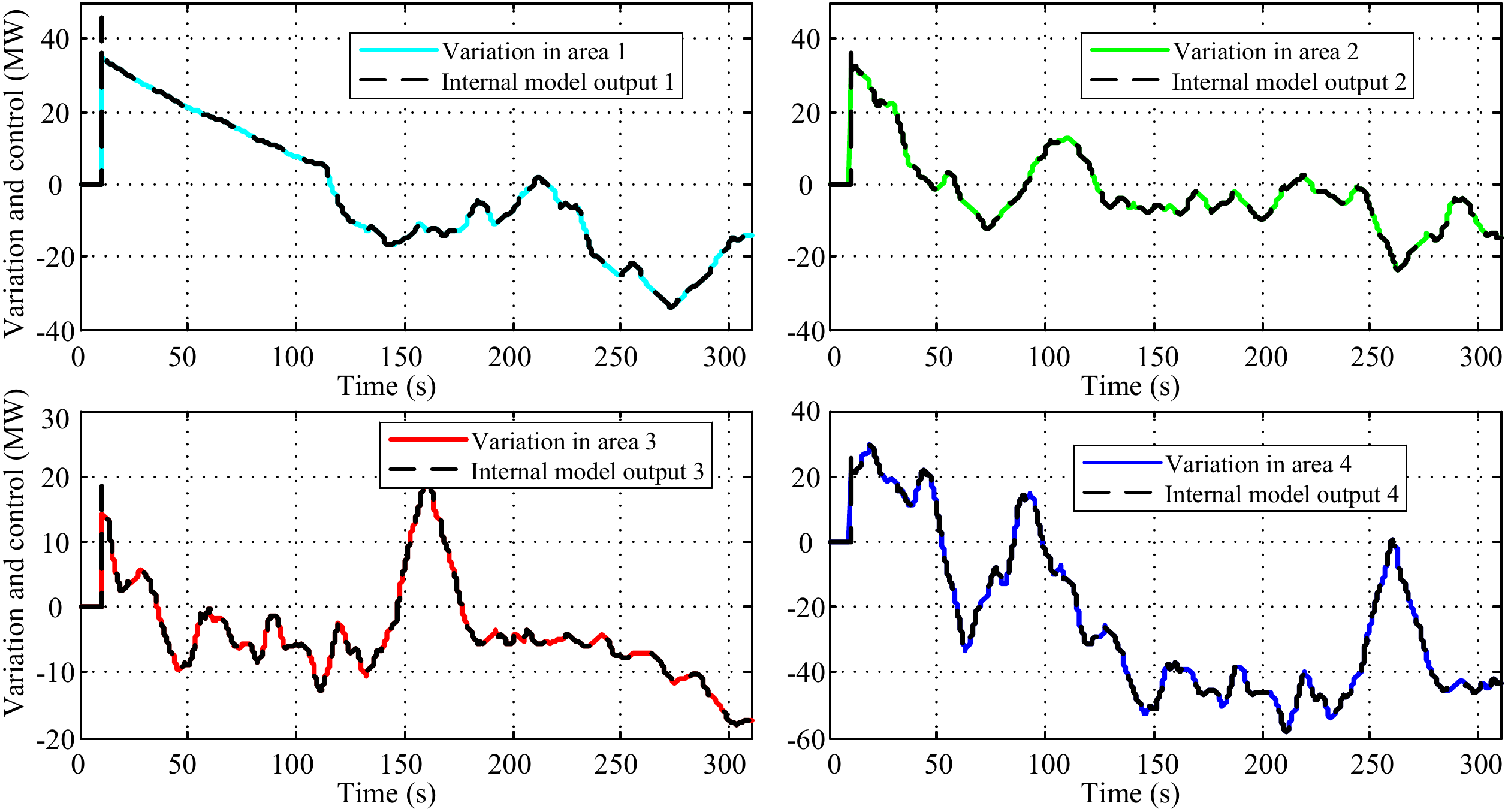}
	\caption{Variation and internal model output with real wind data}
	\label{fig_error_realdata}
\end{figure}

\subsection{Comparison with Existing Control Methods}

First, we compare the proposed method \eqref{controller_varying} with conventional PI control. In the PI control, the control command is $P_j^l=K_P\omega_j+K_I\int\omega_j\ dt$, where $K_P$ in each area are $(500, 500, 500, 800)$, and $K_I$ are $(2500, 2500, 2500, 4000)$. The frequency dynamics are given in the left part of Figure \ref{fig_PI}, where the inset is the enlarged version. It is shown that the frequency nadir is much larger than our method and the variation cannot be eliminated. This result demonstrates the superiority of the proposed method to the traditional PI control. 
\begin{figure}[t]
	\centering
	\includegraphics[width=0.49\textwidth]{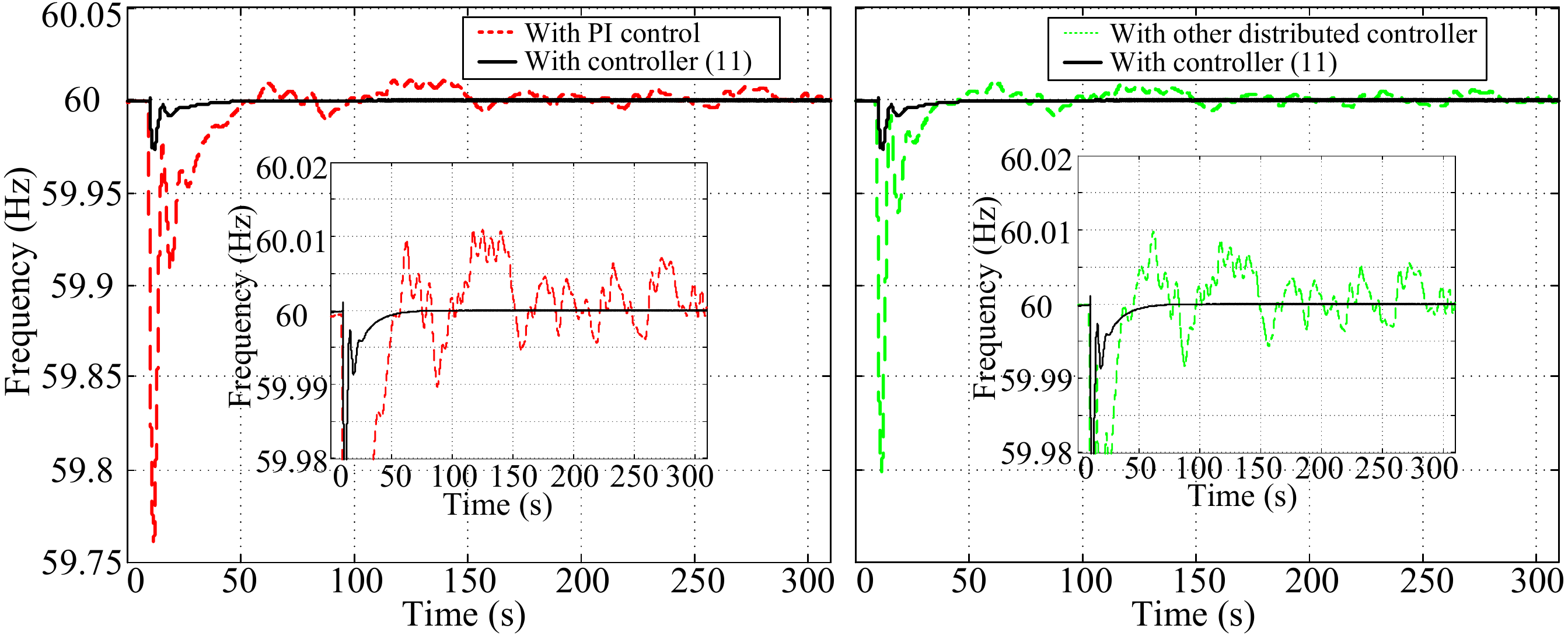}
	\caption{Frequency dynamics compared with PI control (left) and other distributed controller (right)}
	\label{fig_PI}
\end{figure}

We also compare the proposed method \eqref{controller_varying} with the distributed controller in \cite{Mallada-2017-OLC-TAC}. To make a valid comparison, we do not consider line constraints when using controller in \cite{Mallada-2017-OLC-TAC}, and the objective function is same with this paper. The frequency dynamics are given in the right part of Figure \ref{fig_PI}.
Similarly, the frequency variation is not eliminated, demonstrating the superiority of our controller in coping with unknown and time-varying power imbalance. }

\section{Conclusion}
{\color{black}
	In this paper, we have addressed the distributed frequency control problem of power systems in the presence of unknown and time-varying power imbalance.  We have decomposed power imbalance into three parts at different timescales: the known steady part, the unknown low-frequency variation and the unknown high-frequency residual. Then the distributed frequency control problem at the three different timescales are solved in a unified control framework composed of three timescales: 
	\begin{enumerate}
		\item \textbf{The slow timescale:} designing a consensus-based distributed control to allocate the steady part of power imbalance economically; 
		\item \textbf{The medium timescale:} devising an internal model control to accurately track and compensate for the time-varying unknown power imbalance locally; 
		\item \textbf{The fast timescale:} using the  $L_2$-gain inequality to show the robustness of the controller against uncertain disturbances and parameters.
	\end{enumerate}
	We have conducted numerical experiments using data of  the New England system and real-world wind farms. The empirical results show that our distributed controller can mitigate the frequency fluctuation caused by the integration of large uncertain and time-varying renewable generation. The test results also confirm that our controller outperforms existing ones.

This paper intends to provide a systematic approach to deal with unknown and time-varying power imbalance in an economic manner. Besides renewable generations, power oscillations and malicious attacks on controllers can also lead to unknown and time-varying power variation in power system operation. The proposed method could be extended to cope with such problems, which are among our future studies. 
}

\section*{References}

\bibliography{mybib,PowerRef-201202}

\end{document}